\documentclass{article}

 \usepackage[dblblindworkshop, final]{neurips_2026}
\workshoptitle{TAE (Trust-AI-Eval): Can We Trust AI Evaluation?}

\setcitestyle{round}

\usepackage[utf8]{inputenc} 
\usepackage[T1]{fontenc}    
\usepackage{hyperref}       
\hypersetup{hidelinks}
\usepackage{url}            
\usepackage{booktabs}       
\usepackage{amsfonts}       
\usepackage{nicefrac}       
\usepackage{microtype}      
\usepackage{xcolor}         
\usepackage{multirow}
\usepackage{graphicx}
\usepackage{comment}

\usepackage{amsmath}
\usepackage{amssymb}
\usepackage{bbm}
\usepackage{array}
\usepackage{tabularx}
\newcommand{\R}{\mathbb{R}}
\newcommand{\E}{\mathbb{E}}

\usepackage{amsthm}
\theoremstyle{plain}

\newtheorem{prop}{Proposition}[section]

\theoremstyle{definition}

\theoremstyle{remark}
\newtheorem*{remark}{Remark}

\title{{\Large Can We Trust Evaluations of AI Weather Forecasts for Extremes? How Evaluation Protocols Shape Model Rankings}}

\author{
    Tobias Biegert \\
    Institute of Statistics \\
    Karlsruhe Institute of Technology \\
    76185 Karlsruhe, Germany \\
    \texttt{tobias.biegert@kit.edu} \\
    \And
    Sam Allen \\
    Institute of Statistics \\
    Karlsruhe Institute of Technology \\
    76185 Karlsruhe, Germany \\
    \texttt{sam.allen@kit.edu} \\
    \And
    Zhongwei Zhang \\
    Institute of Statistics \\
    Karlsruhe Institute of Technology \\
    76185 Karlsruhe, Germany \\
    \texttt{zhongwei.zhang@kit.edu} \\
    \And
    Sebastian Lerch \\
    Dept. of Mathematics and Computer Science \\
    Marburg University \\
    35032 Marburg, Germany \\
    \texttt{sebastian.lerch@uni-marburg.de} \\
}

\begin{document}

\maketitle

\begin{abstract}
  AI weather models perform competitively with traditional physics-based models at a fraction of the computational cost, and they are thus already used operationally. However, studies reach conflicting conclusions about their ability to forecast extreme weather events, and comparisons across studies are complicated by differing evaluation protocols. We discuss the challenges that arise when comparing AI and physics-based weather forecasts for extreme events. Within a common experimental design, we find that model rankings are highly sensitive to the choice of evaluation metrics: the considered physics-based model tends to perform best when evaluation is restricted to observed extreme events, whereas AI models generally perform better when evaluated using weighted scoring functions. A contingency table analysis shows that AI models underpredict the frequency of extremes, so that their scores are generally dominated by misses, while the physics-based model forecasts extremes more frequently and its scores are dominated by false alarms. Different evaluation metrics penalise these errors differently. Further score decompositions confirm that the score differences arise primarily from miscalibration rather than discrimination ability. This sheds light on the contrasting results in the literature, and highlights the importance of the evaluation metrics used to evaluate AI models.
\end{abstract}

\section{Introduction}
\label{sec:intro}

Recent large-scale data-driven (or AI) weather prediction models have demonstrated competitive performance with traditional physics-based numerical weather prediction (NWP) models at a fraction of the computational cost \citep{Bi2023pangu,Lam2023graphcast, chen2023fuxi,lang2024aifs,Bodnar2025aurora}. 
The development of data-driven weather models has accelerated substantially in the past few years, and AI models are already being employed at operational weather centres.

The value of weather forecasts lies in their ability to aid decision making. Since extreme weather events typically lead to the largest socio-economic impacts, accurate and reliable forecasts for these events are particularly valuable. Several studies have therefore examined how well AI models can predict extreme weather events: targeted evaluations report competitive or superior performance when predicting tropical-cyclone tracks \citep{Bi2023pangu,Lam2023graphcast,Bodnar2025aurora,DeMaria2025} and in some classification-based evaluations of temperature extremes \citep{Lam2023graphcast,BenBouallegue2024}, results for atmospheric rivers are more metric-dependent \citep{Lam2023graphcast,Davis2026}, while several studies identify persistent underestimation of tropical- and midlatitude-cyclone intensity and peak winds \citep{BenBouallegue2024,DeMaria2025,Dacre2026}. Case studies likewise yield event-dependent results \citep{CharltonPerez2024,Pasche2025}, motivating broader comparisons over common samples of extreme events.

However, independent evaluation studies that systematically compare AI and physics-based models when forecasting extreme events have produced seemingly conflicting results: \citet{Zhang2026} conclude that AI models are less capable of predicting record-breaking heat, cold, and wind extremes than physics-based models, whereas \citet{OlivettiMessori2024} find that AI weather models marginally outperform physics-based models when predicting the most extreme $5\%$ heat and wind events, and \citet{Biegert2026twpcrps} find that AI models generally issue more informative predictions. 

These contrasting results may partly reflect differences in how the models are evaluated. Generally, AI model performance is not characterized by a single intrinsic quantity, but depends on the data, task, evaluation metric, and comparison protocol. Consequently, benchmark design can affect both reported performance and model rankings \citep{Flach2019,liao2021are}. 
Similar design choices arise when evaluating AI weather prediction models, which are often compared using different datasets, forecast models, definitions of extreme events, and evaluation metrics. While existing benchmark datasets \citep[e.g., WeatherBench 2,][]{Rasp2024weatherbench2} facilitate standardized comparisons, important choices remain regarding how model performance should be assessed, with additional theoretical challenges arising for extreme events \citep{Lerch2017dilemma,BrehmerStrokorb2019}.

In this paper, we examine the challenges involved in fairly comparing AI and physics-based models when predicting extreme weather events. Within a common experimental design, we compare deterministic (i.e. point-valued) forecasts from three state-of-the-art AI weather models to a physics-based model employed operationally at the European Centre for Medium-Range Weather Forecasting (ECMWF). The models are compared when predicting record-breaking heat, cold, and wind events. Predictive performance is assessed using multiple evaluation metrics. To understand differences between metrics, we examine contributions to the scores from hits, misses, and false alarms, and we use score decompositions to quantify how informative and how miscalibrated the forecasts are. 

\section{Experimental design}
\label{sec:experimental_design}

\subsection{Challenges when evaluating AI and physics-based weather forecasts}

Table~\ref{tab:evaluation_design} lists several design choices that arise when comparing AI and physics-based weather forecasts. 
AI weather models typically provide forecasts for many variables, at multiple times of the day, on a spatial grid over the entire globe, extending out to at least 10 days in advance. Drawing general conclusions about whether AI models perform ``worse'' or ``better'' than physics-based models is therefore difficult, since different models may perform better for different weather variables, spatial regions, or periods of time.
Cross-study comparisons are further complicated by differences in the analysed variables, spatio-temporal domains, and models, as well as by the continual release of new models and model versions.


When evaluating predictive performance, forecasts are typically compared to gridded weather products that provide an estimate of the weather conditions at each grid point and time point; these weather products are generally large-scale \emph{(re)analysis fields} \citep{Hersbach2020}. 
Past archives of these reanalysis fields are also used to train AI models. 
However, if forecasts are evaluated using the same data on which AI models are trained, then such a comparison unfairly favours AI models, since physics-based models are not designed to reproduce this particular estimate of the atmosphere. 
Moreover, reanalyses are produced retroactively, and may incorporate information unavailable to physics-based models at the time of forecasting. To circumvent this, forecasts can be evaluated using independent observation (or \emph{ground truth}) datasets. Alternatively, \emph{operational} versions of AI models have been proposed that are pretrained on reanalysis fields, before being fine-tuned on weather products that were available to physics-based models at initialisation time. This introduces additional questions regarding which observation datasets and model versions should be compared.


Similarly, AI models are trained to optimise a loss function, such as the mean squared error (MSE) or mean absolute error (MAE). Squared-error and absolute-error losses encourage predictions of conditional means and medians, respectively \citep{Gneiting2011}. 
If forecasts are evaluated using the same loss function used to train AI models, then AI models again have an unfair advantage over physics-based models, which do not optimise a specific loss function. Since the mean and median represent measures of central tendency, training AI models to optimise the MSE or MAE can also yield spatially smooth forecast fields \citep{Bonavita2024}. In contrast, deterministic physics-based models provide a single possible realisation of the future weather. 
Comparing forecasts that target different functionals is not straightforward within standard forecast evaluation theory.
Several papers have proposed circumventing this by converting the forecasts to probabilistic forecasts, and then evaluating the probabilistic forecasts using proper scoring rules \citep{Gneiting2026pcrps,Brenowitz2025,Loveday2025}. However, doing so ignores some information in the forecast.

\begin{table}[t]
    \caption{Examples of design choices when comparing AI and physics-based weather models.}
    \label{tab:evaluation_design}
    \centering
    \small
    \renewcommand{\arraystretch}{1.18}
    \setlength{\tabcolsep}{4pt}
    \begin{tabularx}{\linewidth}{
        @{}
        >{\raggedright\arraybackslash}p{0.23\linewidth}
        >{\raggedright\arraybackslash}X
        @{}
    }
        \toprule
        Design choice & Possible specifications \\
        \midrule
        Forecast systems
        & Physics- or AI-based; deterministic or probabilistic; model versions \\

        Observation data
        & Operational analyses, reanalyses,  or in-situ observations \\

        Spatial resolution
        & Native or re-gridded \\

        Weather variables
        & Surface, upper-air, or derived; instantaneous, extrema, or accumulated \\

        Spatio-temporal domain
        & Global, land, ocean, regional, or several stations; different years, times of day \\

        Evaluation methods
        & Scoring functions, scoring rules, contingency tables, other measures \\

        Definition of extremes
        & Records, quantile-based, absolute threshold-based \\

        \bottomrule
    \end{tabularx}
\end{table}

\subsection{Challenges when evaluating forecasts of extreme events}

Comparing predictions made for extreme events is particularly challenging. Firstly, 
different studies have employed different definitions: \cite{OlivettiMessori2024} and \cite{Loveday2025} study extremes defined as exceedances of high quantiles (e.g., of historical observation data), while \citet{Zhang2026} and \cite{Biegert2026twpcrps} focus on record-breaking events, defined as values above historical maxima or below historical minima. 
The dependence of the definitions of extremes on space and time also varies between studies.

Secondly, theoretical challenges arise when evaluating predictions of extremes. 
An intuitive approach is to evaluate only the forecasts issued when an extreme event occurs, as in \cite{OlivettiMessori2024} and \cite{Zhang2026}. 
However, this approach incentivises more extreme predictions, since it does not penalise forecasts that incorrectly predict that an extreme event will occur. \cite{Lerch2017dilemma} term this the forecaster's dilemma. While one could instead seek scoring functions that are consistent for some extreme-value statistic, such as the extreme value index, \cite{BrehmerStrokorb2019} show that such scoring functions often do not exist. \cite{Taggart2022} proposes a weighted version of the squared error designed to target threshold exceedances, though this is still consistent for the mean functional, and thus arguably also favours AI models trained using a squared-error loss. 
These considerations motivate our systematic investigation of how evaluation metrics affect the ranking of AI and physics-based weather models.

\subsection{Our design and method choices}
\label{sec:experimental_design:subsec:choices}

While some of these challenges are unavoidable, in Section \ref{sec:results} we compare AI and physics-based weather forecasts for extremes in a common experimental design to identify the reasons for the contrasting results in previous studies. We compare the operational variants of three AI models, Pangu-Weather \citep{Bi2023pangu}, GraphCast \citep{Lam2023graphcast}, and Aurora \citep{Bodnar2025aurora}, with the ECMWF's high resolution (HRES) physics-based model. The forecasts are evaluated against the analysis fields used to initialise all four models. 
We evaluate all models on a common $0.25^\circ \times 0.25^\circ$ grid at 708 initialisation times at 00 and 12 UTC in 2022, a year not included in the pretraining or fine-tuning data of any AI model, and restrict the lead times to 1, 3, 5, 7, and 10 days. 
Following~\citet{Zhang2026}, we consider predictions for surface temperature and wind speed, and define extremes as record-breaking heat, cold, and wind events, where the records are defined separately for each weather variable, grid point, and month, using historical ERA5 reanalysis data from 1979--2017 \citep{Hersbach2020}. Results are aggregated across all grid points over land, excluding the Antarctic region; global results are reported as a robustness check in Appendix \ref{app:global}. All methods are therefore compared using the same forecast--observation pairs, and the same record thresholds. The four models are compared using the mean squared error, the potential CRPS \citep{Gneiting2026pcrps}, and three evaluation metrics designed to emphasise extreme events.

\section{Evaluation methods and score decompositions}
\label{sec:eval_and_decomp}

A forecast model can be evaluated using pairs of predictions and corresponding observations, denoted by $(x_1, y_1), \dots, (x_n, y_n)$. A prediction $x \in \R$ for $y \in \R$ can be evaluated using a scoring function $s : \R \times \R \to [0, \infty)$, where the score $s(x, y)$ quantifies the error in the prediction; we adopt the convention that a lower score indicates a better prediction. 
Forecast models can then be evaluated using the average score $\sum_{i=1}^n s(x_i, y_i)/n$ over the evaluation data. Average scores are calculated over all grid points and forecast initialisation times. 
While weather forecasts are considered for multiple models, weather variables, and forecast horizons (or lead times), this is suppressed in the notation for simplicity. The scores at each grid point are weighted by the spatial area of each grid point; this is also omitted from the following notation, and details can be found in Appendix~\ref{app:spatial_aggregation}.

\subsection{Evaluation methods}
\label{sec:eval_and_decomp:subsec:eval}

\subsubsection{Mean Squared Error}

The most commonly used scoring function to evaluate the prediction $x$ is the squared error, $s(x, y) = (x - y)^2$, and forecast models are typically compared with respect to their mean squared error $\operatorname{MSE} = \sum_{i=1}^n (x_i - y_i)^2 / n$. When predicting the unknown target variable $Y \in \R$, the expected squared error, $\E[s(x, Y)]$, is (uniquely) minimised when $x = \E[Y]$. In other words, the squared error is (strictly) \emph{consistent} for the mean functional \citep[see][]{Gneiting2011}. The MSE is also commonly used as a loss function to train AI weather models, and since this encourages predictions for $\E[Y]$, this arguably yields predictions that underpredict the occurrence of extreme events \citep{Rasp2024weatherbench2}.

\subsubsection{Outcome-weighted MSE}

For thresholds $t_1, \dots, t_n \in \R$, let $r_i=\mathbbm{1}\{y_i>t_i\}$ when interest is on extremely large outcomes (e.g. for heat and wind records), and let $r_i=\mathbbm{1}\{y_i<t_i\}$ when the focus is on extremely low outcomes (e.g. for cold records). We assume that $\sum_{i=1}^n r_i>0$. When interest is on extremes, a seemingly attractive strategy is to calculate the forecast error conditional on an extreme event having occurred. For the MSE, we refer to the resulting score as the \emph{outcome-weighted MSE}, defined as
\begin{equation}
    \operatorname{owMSE}
    =
    \frac{1}{\sum_{i=1}^n r_i} \sum_{i=1}^n r_i (x_i - y_i)^2.
     \label{eq:owmse}
\end{equation}
However, while intuitive, the owMSE is not consistent for the mean functional; for exceedances of a fixed threshold $t$, the expected score is instead minimised by $x = \E[Y \mid Y > t]$. This raises the forecaster's dilemma, since forecasts are encouraged to predict extreme events, regardless of how likely they are to occur. In particular, the owMSE does not penalise \emph{false alarms}, where the forecast incorrectly predicts that an extreme event will occur.

\subsubsection{Threshold-weighted MSE}

To overcome this limitation, \citet{Taggart2022} proposed to separate scoring functions into component scores that emphasise different regions of the outcome space. For example, given a threshold $t \in \R$, the squared error can be decomposed as $(x-y)^2 = s_{1,t}(x,y)+s_{2,t}(x,y)$,
where 
\begin{align*}
    s_{1,t}(x,y) &= (t-y)_+^2 - (t-x)_+^2 - 2(x-y)(t-x)_+, \\
    s_{2,t}(x,y) &= (y-t)_+^2 - (x-t)_+^2 - 2(y-x)(x-t)_+,
\end{align*}
with $(z)_+=\max\{z,0\}$. The scoring function $s_{1,t}$ focuses on the region $(-\infty,t)$, while $s_{2,t}$ focuses on $[t,\infty)$. Both $s_{1,t}$ and $s_{2,t}$ are consistent for the mean functional. However, they are not strictly consistent, since they are insensitive to forecast performance outside their respective regions of interest. In particular, the expected score $\E[s_{2,t}(x, Y)]$ is minimised by $x = \E[Y]$ when $\E[Y] > t$, and any $x \le t$ when $\E[Y] \le t$, in which case $x = \E[Y] \le t$ is only one possible optimum forecast.

When evaluating heat and wind extremes, we use the score $s_{2,t}$ and compare forecast models using
\begin{equation}
    \operatorname{twMSE}
    =
    \frac{1}{n}\sum_{i=1}^n
    s_{2,t_i}(x_i, y_i),
    \label{eq:twmse}
\end{equation}
which we refer to as the \emph{threshold-weighted MSE}. For cold extremes, $s_{1,t_i}$ is used in place of $s_{2,t_i}$. The twMSE is sensitive to hits (when the forecast correctly predicts that an extreme event will occur), misses (when the forecast fails to predict that an extreme event will occur), and false alarms; correctly predicting that no extreme event will occur contributes zero to the twMSE. However, while the owMSE ignores false alarms, the twMSE penalises them three times as much as misses when forecasts and observations are symmetric about the threshold, thus potentially encouraging less extreme predictions (see Appendix~\ref{app:twmse_details}).

\subsubsection{PCRPS and threshold-weighted PCRPS}
Evaluating forecasts using the MSE implicitly favours AI models trained using a squared-error loss. To circumvent this, \cite{Gneiting2026pcrps} propose to recalibrate deterministic weather forecasts to yield probabilistic forecasts, and to evaluate the resulting forecasts using proper scoring rules, such as the continuous ranked probability score (CRPS). When interest is on extreme events, \cite{Biegert2026twpcrps} extend this approach to employ weighted scoring rules that focus the evaluation on extremes. 
More specifically, for each model, variable, grid point, and lead time, isotonic distributional regression \citep[IDR, or EasyUQ;][]{Walz2024easyuq,Henzi2021idr} is applied to the pairs $(x_1, y_1), \dots, (x_n, y_n)$, converting each point forecast $x_i$ to a predictive distribution function $\widehat F_i$ (under a stochastic ordering constraint). The \emph{potential CRPS} is the average CRPS of these predictive distributions. 

When interest is on threshold exceedances, we transform the pairs $(x_i, y_i)$ to exceedances of the respective thresholds, $(\widetilde x_i,\widetilde y_i)=(x_i-t_i,y_i-t_i)$ for heat and wind records, and $(\widetilde x_i,\widetilde y_i)=(t_i-x_i,t_i-y_i)$ for cold records. For each record type, IDR is then fit to these transformed pairs, yielding predictive distribution functions $\widehat G_i$. We then compute the \emph{threshold-weighted potential CRPS} as
\begin{align}
    \operatorname{twPCRPS}
    =
    \frac{1}{n}\sum_{i=1}^n
    \int_0^\infty
    \left(\widehat G_i(z)-\mathbbm{1}\{\widetilde{y}_i\leq z\}\right)^2
    \,\mathrm{d}z. 
    \label{eq:twpcrps}
\end{align}
The IDR recalibrated distributions $\widehat{G}_i$ are constructed to minimise this average score subject to the stochastic ordering constraint on the transformed forecasts \citep[Theorem 2]{Henzi2021idr}. By focusing on the interval $(0, \infty)$, the twPCRPS assesses the forecasts via their ability to predict threshold exceedances. The IDR recalibration removes biases in the evaluation that arise due to differences in the loss function and datasets used to train AI models; however, this approach also corrects for miscalibration due to the model itself, and it therefore only measures the \emph{potential} predictive ability of the forecasts. The transformation from $(x_i, y_i)$ to $(\widetilde x_i,\widetilde y_i)$ ensures that the same threshold, and therefore the same scoring rule, is applied to all forecast cases, which accounts for the monthly varying thresholds used to define extreme events. If the thresholds $t_i$ are the same for all forecast cases, then this transformation has no effect on the resulting twPCRPS values, and it can therefore be ignored. Further details are given in Appendix~\ref{app:twpcrps_details}.

\subsection{Score decompositions}
\label{sec:score_decompositions}

The average score provides a single numerical value with which competing prediction systems can be compared. Alone, this score does not explain why one prediction system receives a better score than another. However, the average score can be decomposed into terms that quantify the forecast uncertainty, forecast information content, and forecast miscalibration, allowing practitioners to study the contributions of these components to the total score. In particular, for any scoring function $s$, 
\begin{equation}\label{eq:decomp_samp}
        \frac{1}{n} \sum_{i=1}^n s(x_i, y_i) 
    = \underbrace{\frac{1}{n} \sum_{i=1}^n s(\bar{y}, y_i)}_{\mathrm{\widehat{UNC}}} 
    - \underbrace{\frac{1}{n} \sum_{i=1}^n \left[ s(\bar{y}, y_i) - s(x_i^{(r)}, y_i) \right]}_{\mathrm{\widehat{DSC}}} 
    + \underbrace{ \frac{1}{n} \sum_{i=1}^n \left[ s(x_i, y_i) - s(x_i^{(r)}, y_i) \right]}_{\mathrm{\widehat{MCB}}},
\end{equation}
where $\bar{y} = \sum_{i=1}^n y_i /n$ and $x_i^{(r)}$ is a recalibrated version of the prediction $x_i$. The uncertainty component ($\mathrm{\widehat{UNC}}$) 
quantifies the inherent uncertainty in the target variable; for the squared error, this is simply the sample variance of $y_1, \dots, y_n$. The discrimination component ($\mathrm{\widehat{DSC}}$) quantifies how much the observations $y_1, \dots, y_n$ depend on the predictions $x_1, \dots, x_n$, providing a measure of the prediction's information content, or discriminative ability. The miscalibration term ($\mathrm{\widehat{MCB}}$) quantifies how much the predictions differ from their recalibrated versions $x_1^{(r)}, \dots, x_n^{(r)}$.

\citet{Dimitriadis2021} and \citet{GneitingResin2023} suggest recalibrating the predictions by fitting an isotonic (mean) regression model for $y_1, \dots, y_n$ using $x_1, \dots, x_n$ as the covariates. When the isotonic regression model is learned using the standard pool-adjacent-violators (PAV) algorithm, fit in-sample on the evaluation data, the recalibrated predictions $x_1^{(r)}, \dots, x_n^{(r)}$ are guaranteed to perform at least as well as the original predictions $x_1, \dots, x_n$ and the mean prediction $\bar{y}$ when evaluated using any scoring function that is consistent for the mean functional. In this case, the miscalibration and discrimination terms are necessarily non-negative \citep[Theorem 3.3]{GneitingResin2023}.
For strictly consistent scoring functions, the discrimination term is equal to zero if and only if $x_i^{(r)} = \bar{y}$, for all $i = 1, \dots, n$, suggesting that $x_i$ provides no information about $y_i$, and the miscalibration term is equal to zero if and only if $x_i^{(r)} = x_i$, suggesting the original forecasts are calibrated. 

The decomposition can be applied using the owMSE, twMSE, and twPCRPS. Since these scoring functions all focus on extreme outcomes, the resulting miscalibration components assess forecast calibration in the relevant tail. This allows us to assess whether forecasts are calibrated when predicting record breaking events. However, since the owMSE restricts the evaluation to predictions issued when an extreme event occurs, the resulting decomposition terms are not meaningful. An analogous decomposition can also be applied to the twPCRPS using methods in \cite{ArnoldEtAl2024}, but since they also propose to recalibrate predictions using IDR, the resulting $\mathrm{\widehat{MCB}}$ term will be zero for all predictions by construction. This emphasises that the PCRPS and twPCRPS focus only on the discriminative ability of the predictions, and it should therefore be expected that comparisons in terms of the twPCRPS are similar to comparisons of the discrimination term of the twMSE. 

We therefore restrict attention to the decomposition terms calculated using the twMSE. For the twMSE, which is consistent but not strictly consistent for the mean functional, the miscalibration component assesses a notion of calibration that is specific to mean forecasts, whilst restricting attention to the tails; these theoretical properties are provided in Appendix~\ref{app:score_decomp}. When predictions take the form of samples from the outcome distribution, as is the case for physics-based weather models, this definition of calibration may not necessarily be a property that we want to be satisfied. Similarly, Appendix \ref{app:twmse_details} shows the asymmetric treatment of false alarms and misses by the twMSE. Hence, while a miscalibration term equal to zero suggests that the forecasts are calibrated when predicting extremes, different types of miscalibrations will be penalised differently when the forecasts are not calibrated. This should be taken into account when interpreting the decomposition terms in the following section. For comparison, decomposition terms corresponding to the owMSE are shown in Appendix~\ref{app:additional_results} (Figures \ref{fig:owmse_decomposition_0p25_land} and \ref{fig:relative_miscalibration_0p25_land}). For the monthly record thresholds, the twMSE decomposition is applied to the same transformed pairs $(\widetilde x_i,\widetilde y_i)$ used for the twPCRPS, as detailed in Appendix~\ref{app:twmse_details}.

\section{Results}
\label{sec:results}

\subsection{Model rankings across evaluation metrics}

Figure~\ref{fig:main_scores_0p25_land} compares model rankings across evaluation metrics. Panel~(a) reports the unweighted MSE and PCRPS over all forecast cases. At day 1, the physics-based model (HRES) attains a marginally lower score than all AI models for both weather variables and metrics. At longer lead times, Aurora issues the most accurate temperature predictions, followed by GraphCast. For wind speed, Aurora has the lowest PCRPS after day 1, while GraphCast receives the lowest MSE when forecasting more than one week ahead. Across both variables, Pangu-Weather generally performs worst.

\begin{figure}
  \centering
  \includegraphics[width=\linewidth]{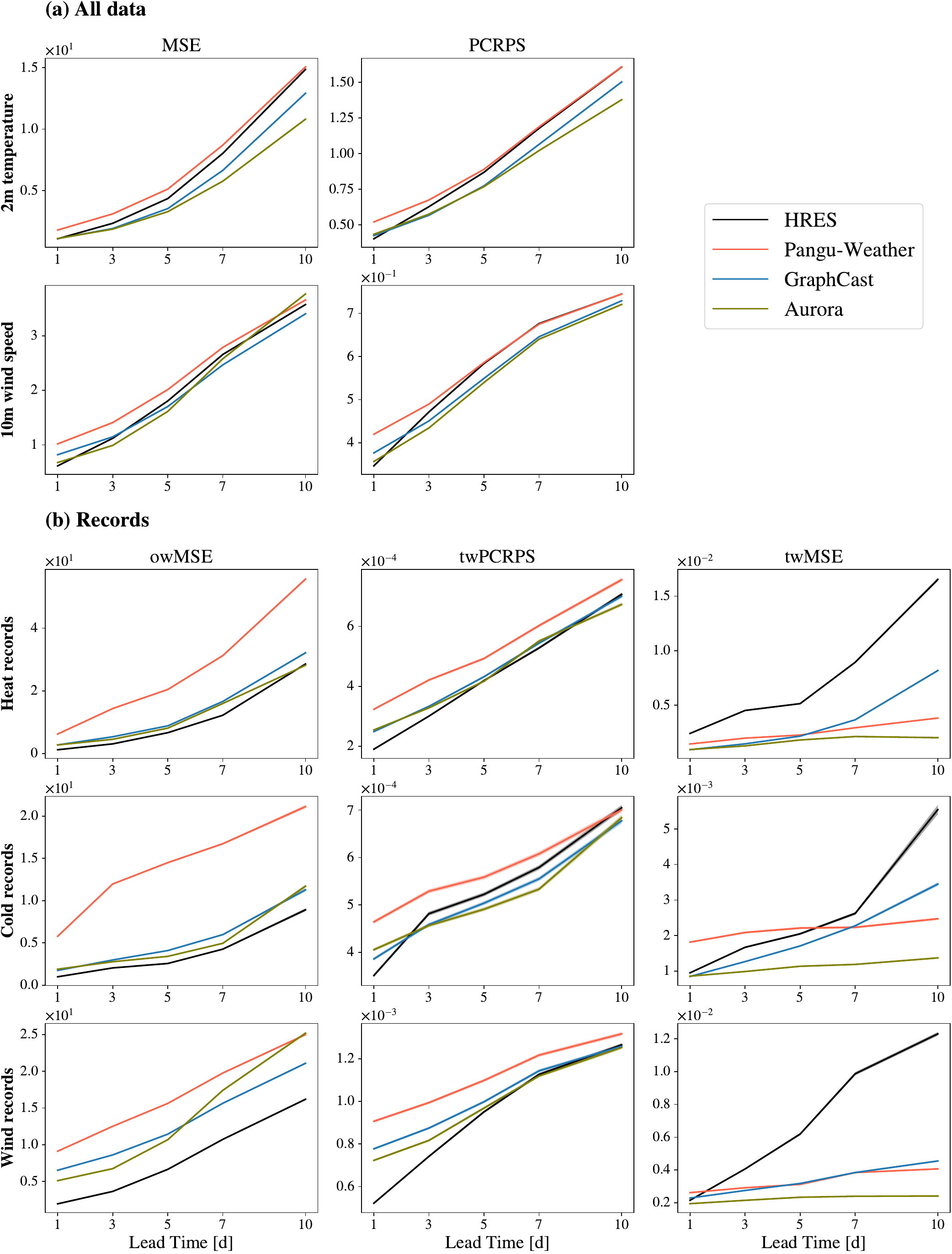}
  \caption{
    {Average scores for the four models.
    Panel~(a) shows MSE and PCRPS over all forecast cases. Panel~(b) shows owMSE, twPCRPS, and twMSE for heat, cold, and wind records. Shaded regions denote pointwise 99\% confidence intervals, calculated as in Appendix~\ref{app:confidence}. Lower values indicate better performance. MSE, owMSE, and twMSE are reported in $\mathrm{K}^2$ for temperature and $\mathrm{m}^2\,\mathrm{s}^{-2}$ for wind speed; PCRPS and twPCRPS are reported in $\mathrm{K}$ and $\mathrm{m}\,\mathrm{s}^{-1}$, respectively.}
  }
  \label{fig:main_scores_0p25_land}
\end{figure}

The differences between evaluation metrics are considerably more pronounced when evaluating predictions for extreme events in panel~(b). With the owMSE, we recover the main qualitative finding of \cite{Zhang2026}: HRES generally receives the lowest score for heat, cold, and wind records across all lead times, while Pangu-Weather generally receives the highest score. Under the twPCRPS, HRES no longer consistently performs better than the AI models, aligning with results in \cite{Biegert2026twpcrps}. Aurora receives the lowest twPCRPS for heat records at day 10. At medium to long lead times, Aurora and GraphCast outperform HRES for cold records, while differences between models become smaller over time for wind records. Pangu-Weather also generally receives the highest twPCRPS. In stark contrast to the owMSE, the physics-based model performs considerably worse than all AI models when evaluated using the twMSE at longer lead times. Aurora generally receives the lowest score for all three record types, and despite its poor owMSE, Pangu-Weather also frequently has a lower twMSE than HRES and GraphCast at longer lead times. Consequently, the conclusion that HRES performs better for record-breaking events appears to be specific to the owMSE. 

The reason for this result is that, by conditioning on observed record-breaking events, the owMSE ignores false alarms, so forecasts are penalised for missing extreme events, but not for predicting extremes that do not occur. Since AI weather models are typically trained to optimise the MSE or MAE, they are encouraged to issue predictions of conditional means or medians, and they thus tend to forecast extreme weather events with lower frequency than physics-based models. 
AI models therefore perform badly when assessed using the owMSE, which only penalises errors related to hits and misses. In contrast, the twMSE additionally accounts for false alarms, and it penalises these more heavily than misses. Hence, when evaluated using the twMSE, the physics-based model is heavily penalised for predicting that extreme events will occur when they do not. The contribution of misses and false alarms to the scores is examined in detail in the following subsection.

\subsection{Record-occurrence behavior}

Table~\ref{tab:record_contingency_land} reports contingency table statistics for each model when predicting the occurrence of an extreme event: the \emph{miss rate} is $\mathrm{MR}=M/(H+M)$, the \emph{false alarm ratio} is $\mathrm{FAR}=F/(H+F)$, and the frequency bias is $\mathrm{FB} = (H+F)/(H+M)$, where $H$, $M$, and $F$ denote counts of hits, misses, and false alarms, respectively, pooled over all forecast cases. $\mathrm{MR}$ and $\mathrm{FAR}$ equal one minus recall and precision, respectively, while $\mathrm{FB}<1$ ($>1$) indicates under- (over-) prediction. 

\begin{table}[t]
  \caption{
    {Miss rate (MR), false alarm ratio (FAR), and frequency bias (FB) for the four models and three types of record-breaking events. The percentage contributions of hits, misses, and false alarms to twMSE are also shown, with the largest contribution in each row in bold. Results are pooled over all initialisation times, grid points, and lead times. The twMSE contributions are latitude weighted.}
  }
  \label{tab:record_contingency_land}
  \centering
    \setlength{\tabcolsep}{5pt}
    \renewcommand{\arraystretch}{1.08}
    \begin{tabular}{@{}llrrr@{\hspace{14pt}}rrr@{}}
    \toprule
    &
    & \multicolumn{3}{c}{Contingency statistics}
    & \multicolumn{3}{c}{twMSE share [\%]} \\
    \cmidrule(lr){3-5}
    \cmidrule(lr){6-8}
    Record type
    & Model
    & MR [\%]
    & FAR [\%]
    & FB
    & Hit
    & Miss
    & False alarm \\
    \midrule
    
    \multirow{4}{*}{Heat}
    & HRES          & 46.89 & 65.09 & 1.521 &  7.77 &  6.50 & \textbf{85.72} \\
    & Pangu-Weather & 81.62 & 49.77 & 0.366 &  9.07 & \textbf{59.86} & 31.07 \\
    & GraphCast     & 69.93 & 49.76 & 0.599 & 13.25 & 27.98 & \textbf{58.77} \\
    & Aurora        & 73.16 & 29.41 & 0.380 & 22.32 & \textbf{58.39} & 19.30 \\

    \addlinespace
    \multirow{4}{*}{Cold}
    & HRES          & 48.68 & 45.68 & 0.945 & 16.28 & 18.61 & \textbf{65.10} \\
    & Pangu-Weather & 94.46 & 42.64 & 0.097 &  8.91 & \textbf{82.45} &  8.63 \\
    & GraphCast     & 69.70 & 37.05 & 0.481 & 24.93 & \textbf{39.89} & 35.18 \\
    & Aurora        & 64.73 & 22.25 & 0.454 & 33.24 & \textbf{56.78} &  9.99 \\

    \addlinespace
    \multirow{4}{*}{Wind}
    & HRES          & 62.30 & 62.06 & 0.994 &  7.17 & 15.39 & \textbf{77.44} \\
    & Pangu-Weather & 95.79 & 76.45 & 0.179 &  3.53 & \textbf{68.27} & 28.20 \\
    & GraphCast     & 91.26 & 68.10 & 0.274 &  7.86 & \textbf{59.00} & 33.14 \\
    & Aurora        & 90.23 & 35.95 & 0.153 & 12.16 & \textbf{82.18} &  5.66 \\

    \bottomrule
  \end{tabular}
\end{table}
The physics-based model has the lowest miss rate for all three record types. Its frequency bias is close to one for cold and wind records, but its value of $1.521$ for heat records indicates substantial overprediction. All three AI models instead underpredict the occurrence of record-breaking events. Among them, Aurora has the lowest $\mathrm{FAR}$ for every record type and the lowest $\mathrm{MR}$ for cold and wind records, while GraphCast has a slightly lower miss rate for heat records. 

Table~\ref{tab:record_contingency_land} additionally shows the proportion of twMSE attributable to hits, misses, and false alarms. These are calculated as described in Appendix~\ref{app:twmse_details}. False alarms account for $65.1\%$--$85.7\%$ of the HRES twMSE across the three record types, whereas the twMSE of the AI models is generally dominated by misses. This is particularly true for Aurora and Pangu-Weather, while the differences between miss and false alarm contributions are smaller for GraphCast. 
By neglecting false alarms, the owMSE ignores the dominant source of error in the twMSE of HRES, helping to explain the contrasting model rankings between the different evaluation metrics.

\subsection{Miscalibration and discrimination}
By recalibrating the predictions, the twPCRPS is less sensitive to the proportions of misses and false alarms, instead providing a measure of information content in the predictions. To verify this, Figure~\ref{fig:twmse_decomposition_0p25_land} shows the estimated miscalibration and discrimination components of the twMSE decomposition.  
As expected, forecast discrimination decreases with lead time for all models and record types, whereas miscalibration generally increases. The discrimination component is generally similar for all models, suggesting all models contain roughly the same information when forecasting extreme weather events; Pangu-Weather consistently issues the least informative predictions.

Instead, model differences are largely attributable to miscalibration. The miscalibration component of HRES increases sharply with lead time, becoming much larger than that of all AI models. This suggests that HRES does not issue calibrated forecasts for extreme weather events, where calibration is defined as in Appendix \ref{app:score_decomp_twmse}. This is related to how the twMSE penalises false alarms, and explains the poor relative performance of the physics-based model when assessed using the twMSE. Among the AI models, Aurora tends to issue the most calibrated forecasts for all record types.

%
This also explains why HRES performs more favourably when assessed using the twPCRPS than the twMSE. As expected, the twPCRPS rankings in Figure~\ref{fig:main_scores_0p25_land}(b) broadly align with the discrimination component in Figure~\ref{fig:twmse_decomposition_0p25_land}, particularly for cold and wind records. 
This indicates that differences in model rankings between the twPCRPS and twMSE are largely attributable to forecast miscalibration. 

\begin{figure}
  \centering
  \includegraphics[width=\linewidth]{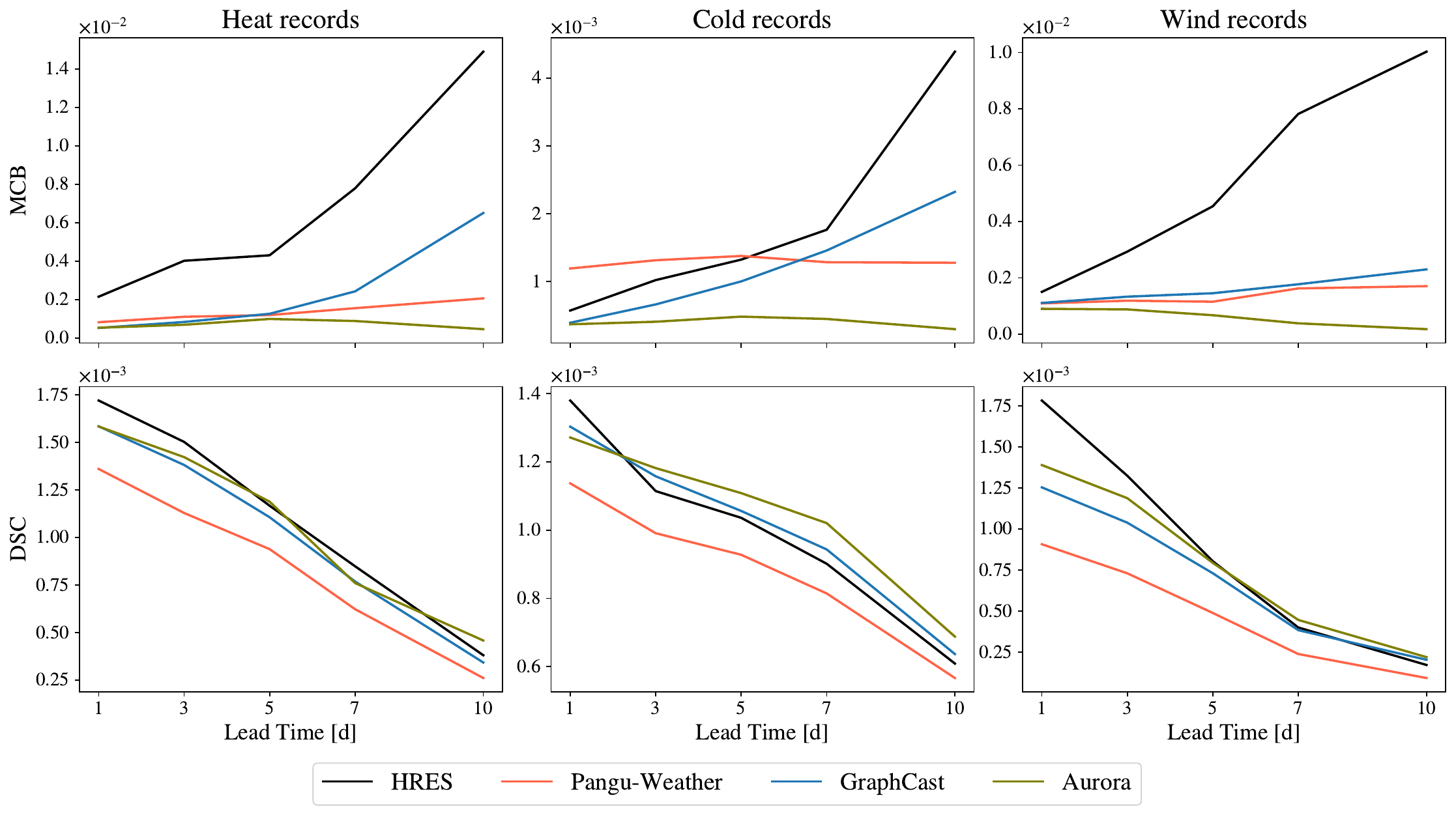}
  \caption{
    {\small 
    Miscalibration (MCB) and discrimination (DSC) components of the decomposition of the average twMSE when evaluating predictions for heat, cold, and wind records. Lower MCB indicates better calibrated predictions, higher DSC indicates more informative predictions. Note the different scales on the $y$-axes.}
  }
  \label{fig:twmse_decomposition_0p25_land}
  \vspace*{-0.5cm}
\end{figure}

\section{Conclusion}\label{sec:conc}

We find that the ranking of physics-based and AI weather models when forecasting extreme events depends strongly on the chosen evaluation metric, even under a common experimental design. 
HRES, the physics-based model investigated here, forecasts record-breaking extreme weather events with a much higher frequency than AI models, which are implicitly designed to issue predictions for the conditional mean or median. If forecasts are evaluated conditional on an extreme event having occurred, as in \cite{Zhang2026}, then physics-based models typically have an advantage. In contrast, 
AI models are generally judged to issue more accurate forecasts than physics-based models when false alarms are penalised more heavily than misses. By decomposing average scores, all models are found to have similar information content, and differences between rankings are largely attributable to forecast miscalibration, with different scores penalising different miscalibrations more or less heavily. This explains contrasting conclusions about the performance of AI weather models when forecasting extreme events, and highlights the importance of understanding what exactly the evaluation metric assesses when comparing AI models.

\section{Limitations}\label{sec:limitations}

In Section \ref{sec:experimental_design}, we list several challenges that impede a fair comparison of AI and physics-based weather models. While we attempt to circumvent these challenges, some are unavoidable. For example, the AI models that we compare are trained on slightly different data sets. In particular, the Aurora model saw observation data between 2018 and 2020 during pretraining, which is outside the record archive. It therefore sees record-breaking events already during pretraining. 
However, since both GraphCast and Aurora are fine-tuned using data beyond the record archive, it is unlikely that this has a significant impact on the conclusions. Furthermore, our empirical comparison covers three AI models and one physics-based model for record-breaking heat, cold, and wind events in 2022. Future work should therefore investigate whether the observed calibration patterns and model rankings generalise to other forecasting systems, evaluation periods, and definitions of extreme events.

Similarly, we evaluate all models against the same observation fields, providing a common evaluation framework. However, these observation fields are intrinsically connected to the construction of the HRES model, and this choice of evaluation data therefore arguably slightly favours the physics-based forecasts. These observation fields are also subject to uncertainties and errors that we do not account for here; these errors can affect scores and the classification of record breaking events, with potentially different effects across models. The robustness of our findings could be assessed by performing the same analysis using an independent set of observation data, such as recordings at weather stations. 

Uncertainty in the results is represented using confidence intervals around average scores in Figure \ref{fig:main_scores_0p25_land}. The method used to calculate these confidence intervals is detailed in Appendix \ref{app:confidence}. This approach assumes that the individual scores are independent over time and space, which is not always realistic. Note, however, that this is much weaker than assuming that the observed weather values are themselves independent over time and space, since the forecast should account for a large amount of this dependence. Performing paired significance tests would also partially address this. Accounting for this dependence would most likely increase the spread of the confidence intervals, and conclusions about significance should therefore be interpreted with caution. 
We additionally do not consider the uncertainty in the score decomposition terms presented in Figure \ref{fig:twmse_decomposition_0p25_land}. Methods for inference on score decomposition terms are sparse \citep{DimitriadisPuke2026}, particularly when the decomposition terms are estimated using isotonic regression.

Finally, following \cite{Lerch2017dilemma}, we argue that the owMSE, which restricts attention to predictions issued when extreme events have occurred, favours models that overpredict the occurrence of extremes. In contrast, the twMSE of \cite{Taggart2022} emphasises threshold exceedances whilst still being a consistent scoring function for the mean. While this arguably gives the twMSE a more solid theoretical grounding than the owMSE, it also arguably inadvertently favours AI models trained using a squared-error loss over physics-based models. The twMSE is also found to penalise false alarms more heavily than misses. Hence, in applications where missing a hazardous event is more costly than issuing a false alarm, the twMSE may not reflect the user-specific preferences. The twPCRPS of \cite{Biegert2026twpcrps} instead recalibrates the forecasts prior to evaluation, but in doing so, it only assesses the forecast information content. It is an open question whether other evaluation metrics exist that provide a suitable means with which to evaluate deterministic forecasts for extreme events.



\section*{Data and code availability}
The WeatherBench 2 data used in this study are publicly available; details on data access are provided by \citet{Rasp2024weatherbench2}. Code to reproduce the results presented in this paper is available at \url{https://github.com/tobiasbiegert/weather-extremes-evaluation}.

\begin{ack}
Tobias Biegert gratefully acknowledges support by the German Weather Service (Deutscher Wetterdienst) through the SPARC-ML project within the extramural research programme, funding reference number 4823EMF01. Sam Allen, Zhongwei Zhang and Sebastian Lerch gratefully acknowledge support by the Vector Stiftung through the Young Investigator Group ``Artificial Intelligence for Probabilistic Weather Forecasting''. 
\end{ack}

\bibliographystyle{plainnat}
\bibliography{references}

@article{AllenEtAl2025,
  title={Tail calibration of probabilistic forecasts},
  author={Allen, Sam and Koh, Jonathan and Segers, Johan and Ziegel, Johanna},
  journal={Journal of the American Statistical Association},
  volume={120},
  pages={2796--2808},
  year={2025},
  publisher={Taylor \& Francis},
  doi={10.1080/01621459.2025.2506194}
}

@article{ArnoldEtAl2024,
  title={Decompositions of the mean continuous ranked probability score},
  author={Arnold, Sebastian and Walz, Eva-Maria and Ziegel, Johanna and Gneiting, Tilmann},
  journal={Electronic Journal of Statistics},
  volume={18},
  pages={4992--5044},
  year={2024},
  publisher={The Institute of Mathematical Statistics and the Bernoulli Society},
  doi={10.1214/24-EJS2316}
}

@article{Zhang2026,
  title={Physics-based models outperform {AI} weather forecasts of record-breaking extremes},
  author={Zhang, Zhongwei and Fischer, Erich and Zscheischler, Jakob and Engelke, Sebastian},
  journal={Science Advances},
  volume={12},
  pages={eaec1433},
  year={2026},
  publisher={American Association for the Advancement of Science},
  doi={10.1126/sciadv.aec1433}
}

@article{OlivettiMessori2024,
  title={Do data-driven models beat numerical models in forecasting weather extremes? {A} comparison of {IFS HRES}, {Pangu-Weather}, and {GraphCast}},
  author={Olivetti, Leonardo and Messori, Gabriele},
  journal={Geoscientific Model Development},
  volume={17},
  pages={7915--7962},
  year={2024},
  publisher={Copernicus Publications G{\"o}ttingen, Germany},
  doi={10.5194/gmd-17-7915-2024}
}

@article{Lerch2017dilemma,
  title={Forecaster's dilemma: Extreme events and forecast evaluation},
  author={Lerch, Sebastian and Thorarinsdottir, Thordis L and Ravazzolo, Francesco and Gneiting, Tilmann},
  journal={Statistical Science},
  number = {1},
  volume = {32},
  doi = {10.1214/16-STS588},
  pages={106--127},
  year={2017},
  publisher={Institute of Mathematical Statistics}
}

@article{Gneiting2026pcrps,
  title={Probabilistic measures afford fair comparisons of {AIWP} and {NWP} model output},
  author={Gneiting, Tilmann and Biegert, Tobias and Kraus, Kristof and Walz, Eva-Maria and Jordan, Alexander I and Lerch, Sebastian},
  journal={Artificial Intelligence for the Earth Systems},
  volume={5},
  pages={e250054},
  year={2026},
  publisher={American Meteorological Society},
  doi={10.1175/AIES-D-25-0054.1}
}

@article{Gneiting2011,
  title={Making and evaluating point forecasts},
  author={Gneiting, Tilmann},
  journal={Journal of the American Statistical Association},
  volume={106},
  pages={746--762},
  year={2011},
  publisher={Taylor \& Francis},
  doi={10.1198/jasa.2011.r10138}
}

@article{GneitingResin2023,
  title={Regression diagnostics meets forecast evaluation: Conditional calibration, reliability diagrams, and coefficient of determination},
  author={Gneiting, Tilmann and Resin, Johannes},
  journal={Electronic Journal of Statistics},
  volume={17},
  pages={3226--3286},
  year={2023},
  publisher={The Institute of Mathematical Statistics and the Bernoulli Society},
  doi={10.1214/23-EJS2180}
}

@article{Biegert2026twpcrps,
  title={Towards Fair Comparisons of {AI}- and Physics-Based Weather Models for Extreme Events via the Weighted Potential {CRPS}},
  author={Biegert, Tobias and Allen, Sam and Alber, Annika and Lerch, Sebastian},
  journal={arXiv preprint arXiv:2606.21170},
  year={2026}
}

@article{Taggart2022,
  title={Evaluation of point forecasts for extreme events using consistent scoring functions},
  author={Taggart, Robert},
  journal={Quarterly Journal of the Royal Meteorological Society},
  volume={148},
  pages={306--320},
  year={2022},
  publisher={Wiley Online Library},
  doi={10.1002/qj.4206}
}

@article{Dimitriadis2021,
  title={Stable reliability diagrams for probabilistic classifiers},
  author={Dimitriadis, Timo and Gneiting, Tilmann and Jordan, Alexander I},
  journal={Proceedings of the National Academy of Sciences},
  volume={118},
  pages={e2016191118},
  year={2021},
  publisher={National Academy of Sciences},
  doi={10.1073/pnas.2016191118}
}

@article{Ehm2016,
  title={Of quantiles and expectiles: consistent scoring functions, {Choquet} representations and forecast rankings},
  author={Ehm, Werner and Gneiting, Tilmann and Jordan, Alexander and Kr{\"u}ger, Fabian},
  journal={Journal of the Royal Statistical Society Series B: Statistical Methodology},
  volume={78},
  pages={505--562},
  year={2016},
  publisher={Oxford University Press},
  doi={10.1111/rssb.12154}
}

@article{Henzi2021idr,
  title={Isotonic distributional regression},
  author={Henzi, Alexander and Ziegel, Johanna F and Gneiting, Tilmann},
  journal={Journal of the Royal Statistical Society Series B: Statistical Methodology},
  volume={83},
  pages={963--993},
  year={2021},
  publisher={Oxford University Press},
  doi={10.1111/rssb.12450}
}

@article{Walz2024easyuq,
  title = {Easy {{Uncertainty Quantification}} ({{EasyUQ}}): Generating Predictive Distributions from Single-Valued Model Output},
  shorttitle = {Easy {{Uncertainty Quantification}} ({{EasyUQ}})},
  author = {Walz, Eva-Maria and Henzi, Alexander and Ziegel, Johanna and Gneiting, Tilmann},
  year = 2024,
  journal = {SIAM Review},
  volume = {66},
  pages = {91--122},
  doi={10.1137/22M1541915}
}

@article{BrehmerStrokorb2019,
  title={Why scoring functions cannot assess tail properties},
  author={Brehmer, Jonas R and Strokorb, Kirstin},
  year={2019},
  journal = {Electronic Journal of Statistics},
  volume = {13},
  pages = {4015--4034},
  publisher = {{Institute of Mathematical Statistics and Bernoulli Society}},
  doi={10.1214/19-EJS1622}
}

@article{BenBouallegue2024,
  title={The rise of data-driven weather forecasting: A first statistical assessment of machine learning--based weather forecasts in an operational-like context},
  author={Ben Bouall{\`e}gue, Zied and Clare, Mariana CA and Magnusson, Linus and Gascon, Estibaliz and Maier-Gerber, Michael and Janou{\v{s}}ek, Martin and Rodwell, Mark and Pinault, Florian and Dramsch, Jesper S and Lang, Simon TK and others},
  journal={Bulletin of the American Meteorological Society},
  volume={105},
  pages={E864--E883},
  year={2024},
  publisher={American Meteorological Society},
  doi={10.1175/BAMS-D-23-0162.1}
}

@article{DeMaria2025,
  title={An operations-based evaluation of tropical cyclone track and intensity forecasts from artificial intelligence weather prediction models},
  author={DeMaria, Mark and Franklin, James L and Chirokova, Galina and Radford, Jacob and DeMaria, Robert and Musgrave, Kate D and Ebert-Uphoff, Imme},
  journal={Artificial Intelligence for the Earth Systems},
  volume={4},
  pages={240085},
  year={2025},
  publisher={American Meteorological Society},
  doi={10.1175/AIES-D-24-0085.1}
}

@article{Dacre2026,
  title={Northern hemisphere midlatitude cyclone intensity biases in machine learning weather prediction models},
  author={Dacre, HF and Charlton-Perez, AJ and Driscoll, S and Gray, SL and Harvey, B and Harvey, NJ and Hodges, KI and Hunt, KMR and Volont{\'e}, A},
  journal={Bulletin of the American Meteorological Society},
  volume={107},
  pages={E208--E221},
  year={2026},
  publisher={American Meteorological Society},
  doi={10.1175/BAMS-D-25-0129.1}
}

@article{Davis2026,
  title={Physics-based versus {AI} Weather prediction models: A comparative performance assessment of Atmospheric River prediction},
  author={Davis, Isaac W and Subramanian, Aneesh and Higgins, Timothy B and Sengupta, Agniv and Delle Monache, Luca},
  journal={Geophysical Research Letters},
  volume={53},
  pages={e2025GL117609},
  year={2026},
  publisher={Wiley Online Library},
  doi={10.1029/2025GL117609} 
}

@article{CharltonPerez2024,
  title={Do {AI} models produce better weather forecasts than physics-based models? {A} quantitative evaluation case study of {Storm Ciar{\'a}n}},
  author={Charlton-Perez, Andrew J and Dacre, Helen F and Driscoll, Simon and Gray, Suzanne L and Harvey, Ben and Harvey, Natalie J and Hunt, Kieran MR and Lee, Robert W and Swaminathan, Ranjini and Vandaele, Remy and others},
  journal={npj Climate and Atmospheric Science},
  volume={7},
  pages={93},
  year={2024},
  publisher={Nature Publishing Group UK London},
  doi={10.1038/s41612-024-00638-w}
}

@article{Pasche2025,
  title={Validating deep learning weather forecast models on recent high-impact extreme events},
  author={Pasche, Olivier C and Wider, Jonathan and Zhang, Zhongwei and Zscheischler, Jakob and Engelke, Sebastian},
  journal={Artificial Intelligence for the Earth Systems},
  volume={4},
  pages={e240033},
  year={2025},
  publisher={American Meteorological Society},
  doi={10.1175/AIES-D-24-0033.1}
}

@article{Rasp2024weatherbench2,
  title={{WeatherBench} 2: A benchmark for the next generation of data-driven global weather models},
  author={Rasp, Stephan and Hoyer, Stephan and Merose, Alexander and Langmore, Ian and Battaglia, Peter and Russell, Tyler and Sanchez-Gonzalez, Alvaro and Yang, Vivian and Carver, Rob and Agrawal, Shreya and others},
  journal={Journal of Advances in Modeling Earth Systems},
  volume={16},
  pages={e2023MS004019},
  year={2024},
  publisher={Wiley Online Library},
  doi={10.1029/2023MS004019}
}

@article{Lam2023graphcast,
  title={Learning skillful medium-range global weather forecasting},
  author={Lam, Remi and Sanchez-Gonzalez, Alvaro and Willson, Matthew and Wirnsberger, Peter and Fortunato, Meire and Alet, Ferran and Ravuri, Suman and Ewalds, Timo and Eaton-Rosen, Zach and Hu, Weihua and others},
  journal={Science},
  volume={382},
  pages={1416--1421},
  year={2023},
  publisher={American Association for the Advancement of Science},
  doi={10.1126/science.adi2336}
}

@article{Bi2023pangu,
  title={Accurate medium-range global weather forecasting with {3D} neural networks},
  author={Bi, Kaifeng and Xie, Lingxi and Zhang, Hengheng and Chen, Xin and Gu, Xiaotao and Tian, Qi},
  journal={Nature},
  volume={619},
  pages={533--538},
  year={2023},
  publisher={Nature Publishing Group UK London},
  doi={10.1038/s41586-023-06185-3}
}

@article{Bodnar2025aurora,
  title={A foundation model for the {Earth} system},
  author={Bodnar, Cristian and Bruinsma, Wessel P and Lucic, Ana and Stanley, Megan and Allen, Anna and Brandstetter, Johannes and Garvan, Patrick and Riechert, Maik and Weyn, Jonathan A and Dong, Haiyu and others},
  journal={Nature},
  volume={641},
  pages={1180--1187},
  year={2025},
  publisher={Nature Publishing Group UK London},
  doi={10.1038/s41586-025-09005-y}
}

@article{lang2024aifs,
  title={{AIFS}--{ECMWF}'s data-driven forecasting system},
  author={Lang, Simon and Alexe, Mihai and Chantry, Matthew and Dramsch, Jesper and Pinault, Florian and Raoult, Baudouin and Clare, Mariana CA and Lessig, Christian and Maier-Gerber, Michael and Magnusson, Linus and others},
  journal={arXiv preprint arXiv:2406.01465},
  year={2024}
}

@article{chen2023fuxi,
  title={{FuXi}: A cascade machine learning forecasting system for 15-day global weather forecast},
  author={Chen, Lei and Zhong, Xiaohui and Zhang, Feng and Cheng, Yuan and Xu, Yinghui and Qi, Yuan and Li, Hao},
  journal={npj Climate and Atmospheric Science},
  volume={6},
  pages={190},
  year={2023},
  publisher={Nature Publishing Group UK London},
  doi={10.1038/s41612-023-00512-1}
}

@article{Hersbach2020,
  title={The {ERA5} global reanalysis},
  author={Hersbach, Hans and Bell, Bill and Berrisford, Paul and Hirahara, Shoji and Hor{\'a}nyi, Andr{\'a}s and Mu{\~n}oz-Sabater, Joaqu{\'\i}n and Nicolas, Julien and Peubey, Carole and Radu, Raluca and Schepers, Dinand and others},
  journal={Quarterly Journal of the Royal Meteorological Society},
  volume={146},
  pages={1999--2049},
  year={2020},
  publisher={Wiley Online Library},
  doi={10.1002/qj.3803}
}

@article{Bonavita2024,
  title={On some limitations of current machine learning weather prediction models},
  author={Bonavita, Massimo},
  journal={Geophysical Research Letters},
  volume={51},
  pages={e2023GL107377},
  year={2024},
  publisher={Wiley Online Library},
  doi={10.1029/2023GL107377}
}

@article{Brenowitz2025,
  title={A practical probabilistic benchmark for {AI} weather models},
  author={Brenowitz, Noah D and Cohen, Yair and Pathak, Jaideep and Mahesh, Ankur and Bonev, Boris and Kurth, Thorsten and Durran, Dale R and Harrington, Peter and Pritchard, Michael S},
  journal={Geophysical Research Letters},
  volume={52},
  pages={e2024GL113656},
  year={2025},
  publisher={Wiley Online Library},
  doi={10.1029/2024GL113656}
}

@article{Loveday2025,
  title={Evaluating extreme precipitation forecasts: A threshold-weighted, spatial verification approach for comparing an {AI} weather prediction model against a high-resolution {NWP} model},
  author={Loveday, Nicholas and Hertneky, Tracy},
  journal={arXiv preprint arXiv:2510.25045},
  year={2025}
}

@article{Flach2019,
  title = {Performance Evaluation in Machine Learning: The Good,  the Bad,  the Ugly,  and the Way Forward},
  volume = {33},
  doi = {10.1609/aaai.v33i01.33019808},
  journal = {Proceedings of the AAAI Conference on Artificial Intelligence},
  publisher = {Association for the Advancement of Artificial Intelligence (AAAI)},
  author = {Flach,  Peter},
  year = {2019},
  pages = {9808--9814}
}

@inproceedings{
liao2021are,
title={Are We Learning Yet? {A} Meta Review of Evaluation Failures Across Machine Learning},
author={Thomas Liao and Rohan Taori and Inioluwa Deborah Raji and Ludwig Schmidt},
booktitle={Thirty-fifth Conference on Neural Information Processing Systems Datasets and Benchmarks Track},
year={2021},
url={https://openreview.net/forum?id=mPducS1MsEK}
}

@article{DimitriadisPuke2026,
  title={Statistical Inference for Score Decompositions},
  author={Dimitriadis, Timo and Puke, Marius},
  journal={arXiv preprint arXiv:2603.04275},
  year={2026}
}


\appendix

\section{Additional methodological details}
\label{app:method_details}

\subsection{Spatial aggregation}
\label{app:spatial_aggregation}
Let $\mathcal G$ denote the set of grid points in the evaluation domain. For a fixed model, variable, record type, lead time, and grid point $g \in \mathcal{G}$, let $Q_g$ denote a grid-point-level evaluation metric (e.g. an average score or decomposition component) based on $n_g$ evaluated cases. Let $w_g$ denote the grid-cell area weight computed from the latitude cell bounds: following \cite{Rasp2024weatherbench2}, if $\phi_g^-$ and $\phi_g^+$ are the lower and upper latitude bounds of the cell, then 
\[ 
    w_g \propto \sin(\phi_g^+)-\sin(\phi_g^-). 
\] 
We aggregate the grid-point-level quantities as 
\begin{equation} 
    Q_{\mathcal G} = \frac{\sum_{g\in\mathcal G} w_g n_g Q_g} {\sum_{g\in\mathcal G} w_g n_g} \label{eq:spatial_aggregation} 
\end{equation} 
where the normalising constant in $w_g$ cancels in the weighted average. For the owMSE, $n_g$ is the number of observed record-breaking events at grid point $g$. For scores evaluated over all forecast cases, $n_g=n$ is constant across grid points and therefore also cancels out. Grid points with $n_g=0$ do not contribute. The same aggregation is applied to the individual score decomposition components.

\subsection{Confidence intervals}
\label{app:confidence}


We quantify uncertainty in the spatially aggregated score $Q_{\mathcal G}$ in \eqref{eq:spatial_aggregation} using a normal approximation. At grid point $g \in \mathcal{G}$, let $Q_{g,j}$ denote a score for forecast--observation pair $j$ among the $n_g$ evaluated pairs, in which case $Q_g$ is the mean of these individual scores. Equation~\eqref{eq:spatial_aggregation} can then be written as
\[
    Q_{\mathcal G}
    =
    \frac{
        \displaystyle
        \sum_{g\in\mathcal G}w_g
        \sum_{j=1}^{n_g}Q_{g,j}
    }{
        \displaystyle
        \sum_{g\in\mathcal G}w_gn_g
    }.
\]

Define the total number of evaluated forecast--observation pairs across the
domain as $N=\sum_{g\in\mathcal G}n_g.$ Then, 
\[
    Q_{\mathcal G} = \frac{1}{N}
    \sum_{g\in\mathcal G}\sum_{j=1}^{n_g}a_gQ_{g,j},
\]
where
\[
    a_g
    =
    \frac{Nw_g}
         {\sum_{h\in\mathcal G}w_hn_h}.
\]
Since $Q_{\mathcal{G}}$ is a weighted sample mean, we can obtain confidence intervals by leveraging the central limit theorem. Following \citet{Zhang2026}, we estimate the variance of the normalized area-weighted individual scores $a_gQ_{g,j}$ by
\[
    \widehat{\sigma}^2
    =
    \frac{1}{N}
    \sum_{g\in\mathcal G}\sum_{j=1}^{n_g}
    \left(a_gQ_{g,j}-Q_{\mathcal G}\right)^2.
\]
For $\alpha \in (0, 1)$, a pointwise $(1-\alpha)$-confidence interval for $Q_{\mathcal G}$ is therefore
\begin{equation}
    \left[
    Q_{\mathcal G}
    -z_{1-\alpha/2}\sqrt{\frac{\widehat{\sigma}^2}{N}},
    \;
    Q_{\mathcal G}
    +z_{1-\alpha/2}\sqrt{\frac{\widehat{\sigma}^2}{N}}
    \right],
    \label{eq:confidence_interval}
\end{equation}
where $z_p$ denotes the $p$-quantile of the standard normal
distribution. We use $\alpha=0.01$, giving pointwise 99\% confidence
intervals. These intervals are constructed separately for each model,
variable, record type, and lead time. \citet{Zhang2026} additionally
apply the Delta-method to obtain confidence intervals for the root mean squared error (RMSE). No such transformation is required here since we report the mean scores directly.

\subsection{twMSE implementation details}
\label{app:twmse_details}
The record threshold $t_i$ varies across months and grid points. This poses no problem for computing the twMSE itself, but the empirical score decomposition in Section~\ref{sec:score_decompositions} requires a common scoring function. We therefore transform each forecast--observation pair to
\begin{equation}
    (\widetilde x_i,\widetilde y_i)
    =
    \begin{cases}
        (x_i-t_i,\,y_i-t_i), & \text{heat and wind records},\\
        (t_i-x_i,\,t_i-y_i), & \text{cold records}.
    \end{cases}
    \label{eq:tail_transform}
\end{equation}
On this scale, $\widetilde x_i>0$ indicates that a record-breaking event has been forecast, while $\widetilde y_i>0$ denotes that a record-breaking event has been observed.

We apply the fixed upper-tail score $s_{2,0}$ to all transformed pairs. For heat and wind records, we have
\[ 
    s_{2,0}(x_i-t_i,y_i-t_i) 
    = 
    s_{2,t_i}(x_i,y_i). 
\]
For cold records,
\begin{align*} 
    s_{2,0}(t_i-x_i,t_i-y_i) 
    &= (t_i-y_i)_+^2-(t_i-x_i)_+^2 -2\bigl[(t_i-y_i)-(t_i-x_i)\bigr](t_i-x_i)_+ \\ 
    &= (t_i-y_i)_+^2-(t_i-x_i)_+^2 -2(x_i-y_i)(t_i-x_i)_+ \\ 
    &= s_{1,t_i}(x_i,y_i). 
\end{align*}
Consequently, the twMSE can be computed for all three event types as
\[ 
    \operatorname{twMSE} = \frac{1}{n}\sum_{i=1}^n s_{2,0}(\widetilde x_i,\widetilde y_i). 
\]

The twMSE can be written as 
\begin{equation} 
    s_{2,0}(\widetilde x_i,\widetilde y_i) 
    = 
    \begin{cases} 
        (\widetilde x_i-\widetilde y_i)^2, & \widetilde x_i>0,\ \widetilde y_i>0 \quad\text{(hit)},\\ 
        \widetilde y_i^2, & \widetilde x_i\leq0,\ \widetilde y_i>0 \quad\text{(miss)},\\ 
        \widetilde x_i^2-2\widetilde x_i\widetilde y_i, & \widetilde x_i>0,\ \widetilde y_i\leq0 \quad\text{(false alarm)},\\ 
        0, & \widetilde x_i\leq0,\ \widetilde y_i\leq0 \quad\text{(correct non-detection)}. 
    \end{cases} 
    \label{eq:twmse_cases} 
\end{equation}

For the contribution shares reported in the results, let $\mathcal L$ denote the set of evaluated lead times and let $C_{g,i,\ell} \in \{\text{hit},\text{miss},\text{false alarm}\}$ denote the event category for case $i \in \{1, \dots, n\}$ at grid point $g \in \mathcal{G}$ and lead time $\ell\in\mathcal L$. Using the grid-cell area weights $w_g$ defined in Appendix~\ref{app:spatial_aggregation}, the pooled share attributable to category $c$ is
\begin{equation} 
    q_c = 
    \frac{ 
    \displaystyle \sum_{\ell\in\mathcal L}\sum_{g\in\mathcal G} w_g \sum_{i=1}^n s_{2,0}(\widetilde x_{g,i,\ell},\widetilde y_{g,i,\ell}) \mathbbm{1}\{C_{g,i,\ell}=c\} 
    }{ 
    \displaystyle \sum_{\ell\in\mathcal L}\sum_{g\in\mathcal G} w_g \sum_{i=1}^n s_{2,0}(\widetilde x_{g,i,\ell},\widetilde y_{g,i,\ell}) 
    }, \quad c\in\{\text{hit},\text{miss},\text{false alarm}\},
    \label{eq:twmse_shares} 
\end{equation}
where $\widetilde{x}_{g,i,\ell} \in \R$ is the $i$-th (transformed) prediction at grid point $g$ and lead time $\ell$, and $\widetilde{y}_{g,i,\ell}$ is the corresponding (transformed) observation. For lead-time-specific results, the sum over $\ell$ is omitted. The quantities $q_c$ are the proportions of the total twMSE attributable to hits, misses, and false alarms, and they therefore reflect both the frequency and magnitude of the respective errors. The three contributions sum to one whenever the denominator in Equation~\eqref{eq:twmse_shares} is positive.

The case-wise representation in Equation~\eqref{eq:twmse_cases} also reveals an asymmetry between misses and false alarms. 
Let $a>0$ and consider two cases that mirror each other at the threshold: a miss with $(\widetilde x_i,\widetilde y_i)=(-a,a)$ and a false alarm with $(\widetilde x_i,\widetilde y_i)=(a,-a)$. Both have the same unweighted squared error, $4a^2$, but their contributions to the twMSE are 
\begin{align*} 
    s_{2,0}(-a,a) &= a^2,\\ 
    s_{2,0}(a,-a) &= a^2-2a(-a)=3a^2. 
\end{align*} 
Thus, for these equally sized symmetric errors, the false alarm is penalised three times as much as the miss. This asymmetry is a property of the threshold-weighted squared error and should not be interpreted as an economic cost ratio. This likely occurs since the twMSE remains consistent for the mean functional, meaning it is designed to encourage predictions of the mean of the outcome, rather than predictions in the tail.

\subsection{twPCRPS implementation details}
\label{app:twpcrps_details}

The optimality result of \citet[Theorem 2]{Henzi2021idr} concerns the average of a common scoring rule across all fitted cases. It does not generally guarantee optimality when IDR is fitted to the original forecast--observation pairs and the resulting distributions are evaluated using thresholds that vary between cases. We therefore apply the transformation in Equation~\eqref{eq:tail_transform} before fitting IDR, separately for each record type, and evaluate the resulting distributions using the upper-tail twCRPS at threshold zero, as in Equation~\eqref{eq:twpcrps}. This retains in-sample optimality subject to the stochastic ordering constraint on the transformed forecasts.

The ordering constraint now associates larger transformed forecasts $\widetilde x_i$ with stochastically larger predictive distributions for $\widetilde y_i$. It therefore concerns forecast and outcome departures from their monthly record thresholds in the direction of the relevant extreme. The rationale is that larger forecast departures should indicate a greater chance of exceeding the relevant record by any given amount, after accounting for seasonal differences in record levels. When the threshold is constant across all cases within a fit, transforming and refitting yields the same twPCRPS as fitting the original pairs and evaluating the corresponding tail at the original threshold. With varying thresholds, the transformation changes the regression problem and can change the ordering of the forecasts, so the resulting twPCRPS generally differs. This contrasts with the twMSE, whose individual score values are preserved by the transformation.

\subsection{Computational resources}
\label{app:computational_resources}
We evaluate existing WeatherBench 2 forecasts and neither train nor run the forecasting models. All final $0.25^\circ$ experiments were performed without GPUs on a shared internal server equipped with two Intel Xeon E5-2680 v3 processors at 2.50 GHz, providing 24 physical cores and 48 hardware threads, and 236 GB of memory. The MSE and owMSE computations used Dask's default threaded scheduler, the twMSE computations used 24 Dask worker processes, and the PCRPS and twPCRPS computations used the Apache Beam DirectRunner with 40 worker processes. CPU cores were not reserved for the MSE, owMSE, and twMSE computations, whereas the PCRPS and twPCRPS computations used 40 reserved hardware threads, corresponding to 20 physical cores. The observed wall-clock times reflect different resource allocations and parallelisation settings and should therefore not be interpreted as controlled comparisons of computational complexity.
The MSE, owMSE, and twMSE computations, including their score decompositions, required approximately 9.7, 5.2, and 2.2 hours, respectively, while the PCRPS and twPCRPS computations required 169.5 hours. Together, the final $0.25^\circ$ experiments required approximately 186.6 wall-clock hours.

The forecast and verification inputs occupied approximately 81 GB, while the stored outputs occupied approximately 118 GB. Of these outputs, 116 GB comprised PCRPS and twPCRPS scores stored separately for every initialisation. This simplified the implementation and avoided retaining the scores in memory, but required substantial storage. Only their temporal means, variances, and counts are needed for the reported spatial aggregates and confidence intervals, so a more storage-efficient implementation could substantially reduce this footprint. Additional lower-resolution exploratory and debugging runs required further compute but are not included in these estimates.

The computations are performed separately for each combination of model, variable, grid point, and lead time. The total computational cost therefore grows linearly with the number of such combinations, which can be processed in parallel. For $n$ evaluated forecast--observation pairs within one combination, calculating MSE, owMSE, and twMSE requires $\mathcal O(n)$ operations. The score decompositions additionally require sorting the forecasts and scale as $\mathcal O(n\log n)$. EasyUQ recalibration has an $\mathcal O(n^2)$ upper bound \citep[Section~3.4]{Walz2024easyuq}, while evaluating the fitted predictive distributions for PCRPS and twPCRPS can require up to $\mathcal O(n^2)$ operations \citep{Gneiting2026pcrps}.

\section{Score decompositions}
\label{app:score_decomp}

\subsection{MSE}

Suppose that the prediction is a random variable, denoted by $X \in \R$, and we wish to evaluate our prediction using the expected score $\E[s(X, Y)]$. Like the average score in Section \ref{sec:eval_and_decomp}, the expected score can be decomposed into terms that quantify the forecast uncertainty, forecast information content, and forecast miscalibration, allowing practitioners to study the contributions of these components to the total score. If $s$ is consistent for the mean functional, we can decompose $\E[s(X, Y)]$ as
\begin{equation}\label{eq:decomp_pop}
{\small 
\E[s(X, Y)] = \underbrace{\E[s(\E[Y], Y)]}_{\mathrm{UNC}} - \underbrace{\left\{\E[s(\E[Y], Y)] - \E[s(\E[Y|X], Y)]\right\}}_{\mathrm{DSC}} + \underbrace{\left\{\E[s(X, Y)] - \E[s(\E[Y|X], Y)]\right\}}_{\mathrm{MCB}}.
}
\end{equation}
For the squared error, this simplifies to
\begin{equation}\label{eq:decomp_se}
    \E[(X - Y)^2] = \underbrace{\E[(\E[Y] - Y)^2]}_{\mathrm{UNC}} - \underbrace{\E[(\E[Y] - \E[Y \mid X])^2]}_{\mathrm{DSC}} + \underbrace{\E[(X - \E[Y \mid X])^2]}_{\mathrm{MCB}}.
\end{equation}
The interpretation of the components is analogous to the interpretation of the sample components in Section \ref{sec:eval_and_decomp}; the sample components can be interpreted as finite sample estimates of the population decomposition terms in Equation~\eqref{eq:decomp_pop}. In particular, it becomes clear from Equation~\eqref{eq:decomp_se} that all terms of the decomposition are non-negative, with the discrimination term equal to zero if and only if $\E[Y] = \E[Y \mid X]$ almost surely, while the miscalibration term is equal to zero if and only if $X = \E[Y \mid X]$ almost surely. This interpretation holds more generally for any scoring function $s$ that is strictly consistent for the mean.

In practice, we evaluate predictions using prediction-observation pairs $(x_1, y_1), \dots, (x_n, y_n)$, which can be interpreted as realisations of the random pair $(X, Y)$. The expected score can then be estimated using the sample mean score over the $n$ pairs, the expectation of $Y$ can be estimated using the sample mean $\bar{y} = \sum_{i=1}^n y_i / n$, and the conditional expectation $\E[Y \mid X]$ can be estimated by recalibrating the original predictions $x_1, \dots, x_n$. That is, the recalibrated values $x_1^{(r)}, \dots, x_n^{(r)}$ can be interpreted as estimates of $\E[Y \mid X = x_i]$. This decomposition is presented in Equation~\eqref{eq:decomp_samp}. In this sample decomposition, the miscalibration term $\mathrm{\widehat{MCB}}$ quantifies the distance between $x_i$ and $x_i^{(r)}$. The condition $x_i^{(r)} = x_i$ can be interpreted as a sample-level analog of the standard calibration condition of mean forecasts, $\E[Y \mid X] = X$, highlighting that $\mathrm{\widehat{MCB}}$ estimates how much $X$ differs from $\E[Y \mid X]$, providing a measure of miscalibration for mean predictions.


\subsection{Threshold-weighted MSE}\label{app:score_decomp_twmse}


Since the threshold-weighted squared error of \cite{Taggart2022} is also consistent for the mean functional, we can analogously decompose the expected score into terms quantifying forecast uncertainty, discrimination, and miscalibration. We focus on the score $s_{2,t}$ introduced in Section \ref{sec:eval_and_decomp:subsec:eval}, where we are interested in threshold exceedances, though an analogous decomposition can be applied to $s_{1,t}$. The resulting decomposition terms are a function of the threshold $t$,
\begin{align}\label{eq:decomp_wse}
\begin{split}
  \E[s_{2,t}(X, Y)] &= \underbrace{\E[s_{2,t}(\E[Y], Y)]}_{\mathrm{UNC}_t} - \underbrace{\left\{\E[s_{2,t}(\E[Y], Y)] - \E[s_{2,t}(\E[Y|X], Y)]\right\}}_{\mathrm{DSC}_t} \\
  &\quad + \underbrace{\left\{\E[s_{2,t}(X, Y)] - \E[s_{2,t}(\E[Y|X], Y)]\right\}}_{\mathrm{MCB}_t}.
\end{split}
\end{align}

The interpretation of the decomposition terms is analogous to before, except the focus is now explicitly on the region above the threshold. Since the twMSE focuses only on threshold exceedances, the discrimination ($\mathrm{DSC}_t$) and miscalibration ($\mathrm{MCB}_t$) terms of this decomposition now quantify how informative and how miscalibrated the predictions are when interest is on extreme events. This is formalised in the following proposition.

\begin{prop}\label{prop:decomp_wse}
    Let $\mathrm{DSC}_t$ and $\mathrm{MCB}_t$ be defined as in Equation~\eqref{eq:decomp_wse}. Then,
    \begin{enumerate}
        \item[i)] $\mathrm{DSC}_t \ge 0$ with equality if and only if $(\E[Y \mid X] - t)_+ = (\E[Y] - t)_+$ almost surely.
        \item[ii)] $\mathrm{MCB}_t \ge 0$ with equality if and only if $(\E[Y \mid X] - t)_+ = (X - t)_+$ almost surely.
    \end{enumerate}
\end{prop}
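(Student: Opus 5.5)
We use the Bregman representation of $s_{2,t}$. Set $\phi_t(z)=(z-t)_+^2$, which is convex and differentiable with $\phi_t'(z)=2(z-t)_+$. Then
\[
s_{2,t}(x,y)=\phi_t(y)-\phi_t(x)-\phi_t'(x)(y-x),
\]
so $s_{2,t}$ is exactly the Bregman divergence of $\phi_t$. This shows directly that it is consistent for the mean.

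The key step is to compute, for any $\sigma(X)$-measurable prediction $Z$ and with $\mu(X)=\E[Y\mid X]$, the conditional expected score. Conditioning on $X$ and using linearity in $y$ of the last two terms gives
\[
\E[s_{2,t}(Z,Y)\mid X]=\E[\phi_t(Y)\mid X]-\phi_t(\mu)+ d_t(Z,\mu),
\]
where
\[
d_t(z,m)=\phi_t(m)-\phi_t(z)-\phi_t'(z)(m-z)\ge 0.
\]
We apply this twice.

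For the miscalibration term, take $Z=X$ and $Z=\mu$. Since $d_t(\mu,\mu)=0$, we obtain
\[
\mathrm{MCB}_t=\E[d_t(X,\mu(X))].
\]
For the discrimination term, take $Z=\E[Y]$ and $Z=\mu$, which gives
\[
\mathrm{DSC}_t=\E[d_t(\E[Y],\mu(X))].
\]
Nonnegativity in both i) and ii) then follows from convexity of $\phi_t$, since $d_t\ge 0$ pointwise. This assumes $\E[(Y-t)_+^2]<\infty$ so that all expectations are finite, an assumption I would state.

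For the equality cases, $\E[d_t(\cdot)]=0$ holds if and only if $d_t=0$ almost surely. So the remaining task is to characterise exactly when $d_t(z,m)=0$. The claim is that $d_t(z,m)=0$ if and only if $(z-t)_+=(m-t)_+$. We check this by cases.

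\emph{Case $z,m\le t$.} Here $d_t=0$ trivially, and $(z-t)_+=(m-t)_+=0$.

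\emph{Case $z,m>t$.} Here $d_t=(m-z)^2$, which vanishes if and only if $m=z$, that is, if and only if $(z-t)_+=(m-t)_+$.

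\emph{Case $z\le t<m$.} Here $d_t=(m-t)^2>0$.

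\emph{Case $m\le t<z$.} Here
\[
d_t=-(z-t)^2-2(z-t)(m-z)=(z-t)\bigl[(z-t)+2(t-m)\bigr]>0,
\]
since both factors are positive. This mirrors the false-alarm case formula in Equation~\eqref{eq:twmse_cases}.

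Hence $d_t(z,m)=0$ exactly when $(z-t)_+=(m-t)_+$. Substituting $z=\E[Y]$ yields i), and $z=X$ yields ii).

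I expect the main obstacle to be not the algebra but the careful handling of the conditional expectation step. It requires integrability of $Y$ and of $(Y-t)_+^2$, and the fact that $\phi_t'(Z)$ is $\sigma(X)$-measurable, so that $\E[\phi_t'(Z)(Y-Z)\mid X]=\phi_t'(Z)(\mu-Z)$. The case analysis for strictness, especially the false-alarm region where the sign is less obvious, is the other point to verify carefully.
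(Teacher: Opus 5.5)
Your proof is correct and has the same skeleton as the paper's: condition on $X$, show the conditional score difference is non-negative with the stated equality condition, then take expectations. The difference is in how you justify the pointwise step.

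The paper recalls that $x \mapsto \E[s_{2,t}(x,Y)]$ is minimised exactly by those $x$ with $(x-t)_+ = (\E[Y]-t)_+$. This fact is stated without proof in Section~\ref{sec:eval_and_decomp:subsec:eval}. The paper then applies it to $Y$ given $X$.

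You instead write $s_{2,t}$ as the Bregman divergence of $\phi_t(z)=(z-t)_+^2$. This turns the conditional excess score into the explicit quantity $d_t(Z,\E[Y\mid X])$, giving the closed forms $\mathrm{MCB}_t=\E[d_t(X,\E[Y\mid X])]$ and $\mathrm{DSC}_t=\E[d_t(\E[Y],\E[Y\mid X])]$. Your four-case analysis of $d_t$ then proves the minimiser characterisation that the paper takes for granted. The flatness of $\phi_t$ on $(-\infty,t]$ is exactly what makes the minimiser non-unique.

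What your route buys:
\begin{itemize}
\item It is self-contained.
\item It needs only linearity of conditional expectation, not an unconditional optimality statement transferred to the conditional distributions of $Y$ given $X$.
\item It makes the integrability assumptions explicit; the paper is silent on them.
\item Its cases line up directly with the hit/miss/false-alarm representation in Equation~\eqref{eq:twmse_cases}.
\end{itemize}
The paper's argument is shorter, but it rests on a property it never verifies.
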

\bigskip
\begin{remark}
    The condition for equality in part $i)$ can alternatively be expressed as $\E[Y \mid X = x] > t$ implies $\E[Y \mid X = x] = \E[Y]$, and $\E[Y \mid X = x] \le t$ implies $\E[Y] \le t$. Similarly, the condition for equality in part $ii)$ requires that $\E[Y \mid X = x] > t$ implies $\E[Y \mid X = x] = x$, and $\E[Y \mid X = x] \le t$ implies $x \le t$.
\end{remark}

\begin{proof}[Proof of Proposition \ref{prop:decomp_wse}]
    Consider the expected score $\E[s_{2,t}(X, Y)]$. Using the Tower Property of conditional expectations, this can be rewritten as
    \[
    \E[s_{2,t}(X, Y)] = \E[\E[s_{2,t}(X, Y) \mid X]].
    \]
    Recall that the expected score $\E[s_{2,t}(x, Y)]$ is minimised when $x = \E[Y]$ when $\E[Y] > t$, and by any $x \le t$ when $\E[Y] \le t$. This is equivalent to saying $(x - t)_+ = (\E[Y] - t)_+$. Hence, $\E[s_{2,t}(X, Y) \mid X]$ is minimised when $(X - t)_+ = (\E[Y \mid X] - t)_+$ almost surely. It follows that $\E[Y \mid X]$ receives the optimal expected score (conditional on $X$), though not uniquely. Thus, we must have that 
    \[
    \E[s_{2,t}(X, Y)\mid X] - \E[s_{2,t}(\E[Y \mid X], Y) \mid X] \ge 0,
    \]
    with equality if and only if $(X - t)_+ = (\E[Y \mid X] - t)_+$. By taking expectations, we have that 
    \[
    \mathrm{MCB}_t = \E[s_{2,t}(X, Y)] - \E[s_{2,t}(\E[Y \mid X], Y)] \ge 0, 
    \]
    for any threshold $t$. Since this is an expectation over the conditional score differences, which are guaranteed to be non-negative, equality holds under the same conditions.
    
    An analogous result holds for the discrimination term, where we have that
    \[
    \E[s_{2,t}(\E[Y], Y)\mid X] - \E[s_{2,t}(\E[Y \mid X], Y) \mid X] \ge 0, 
    \]
    with equality if and only if $(\E[Y \mid X] - t)_+ = (\E[Y] - t)_+$. The discrimination term is the expectation of this, and is therefore similarly guaranteed to be non-negative and equal to zero under the same conditions.
\end{proof}

Proposition \ref{prop:decomp_wse} shows that the miscalibration term in Equation~\eqref{eq:decomp_wse} assesses whether $(\E[Y \mid X] - t)_+ = (X - t)_+$ almost surely. We refer to this condition as \emph{local calibration} of mean forecasts. This is weaker than the standard calibration condition $X = \E[Y \mid X]$ since it assesses calibration only above a threshold; this is similar in essence to the definition of tail auto-calibration of probabilistic forecasts in \cite{AllenEtAl2025}. In particular, local calibration only requires that $X = \E[Y \mid X]$ when the conditional expectation of $Y$ given $X$ is extreme. This highlights how the decomposition focuses only on calibration when forecasting extreme outcomes. 

A lower threshold corresponds to a stronger notion of calibration, and standard calibration can be interpreted as local calibration in the limit $t \rightarrow -\infty$. However, the concept of local calibration is still useful, since it allows us to assess whether predictions are calibrated when forecasting extreme outcomes. For example, it may be the case that the predictions are not calibrated, but they do issue calibrated forecasts for extremes. Moreover, from the measure of miscalibration assessed by decompositions of the unweighted squared error, it is not possible to determine how much of this miscalibration is attributable to predictions of extremes. However, this notion of local calibration is specific to mean forecasts, and it thus may not be meaningful when predicting other functionals.

We can similarly introduce sample estimates of the decomposition terms for the weighted squared error by replacing $s$ with $s_{2,t}$ in Equation~\eqref{eq:decomp_samp}. Since the threshold-weighted squared error is consistent for the mean functional, it can be expressed as a linear combination of elementary scoring functions \citep{Ehm2016}. \citet[Theorem 3.1]{GneitingResin2023} demonstrate that the PAV algorithm can be used to generate optimal predictions with respect to any scoring function in this form. Hence, since the sample mean $\bar{y}$ and the original predictions $x_i$ are possible solutions to the PAV algorithm, it follows that the sample estimates of the discrimination and miscalibration terms are themselves non-negative. We additionally have that the discrimination term is equal to zero if and only if $(x_i^{(r)} - t)_+ = (\bar{y} - t)_+$, and the miscalibration term is equal to zero if and only if $(x_i - t)_+ = (x_i^{(r)} - t)_+$. These correspond to sample versions of the requirements in Proposition \ref{prop:decomp_wse}.

\section{Additional results}
\label{app:additional_results}
Figure~\ref{fig:twmse_contributions_0p25_land} separates the twMSE contributions from hits, misses, and false alarms in Table~\ref{tab:record_contingency_land} by lead time. The absolute contributions to the twMSE are shown, rather than the percentage contributions. The broad contrast between physics-based and AI models persists throughout the forecast range. The HRES twMSE is dominated by false alarms, whose contribution increases strongly with lead time. For the AI models, misses generally make up a larger part of the score, especially for Pangu-Weather and Aurora, while GraphCast exhibits a more mixed composition.

\begin{figure}
  \centering
  \includegraphics[
    width=\textwidth
  ]{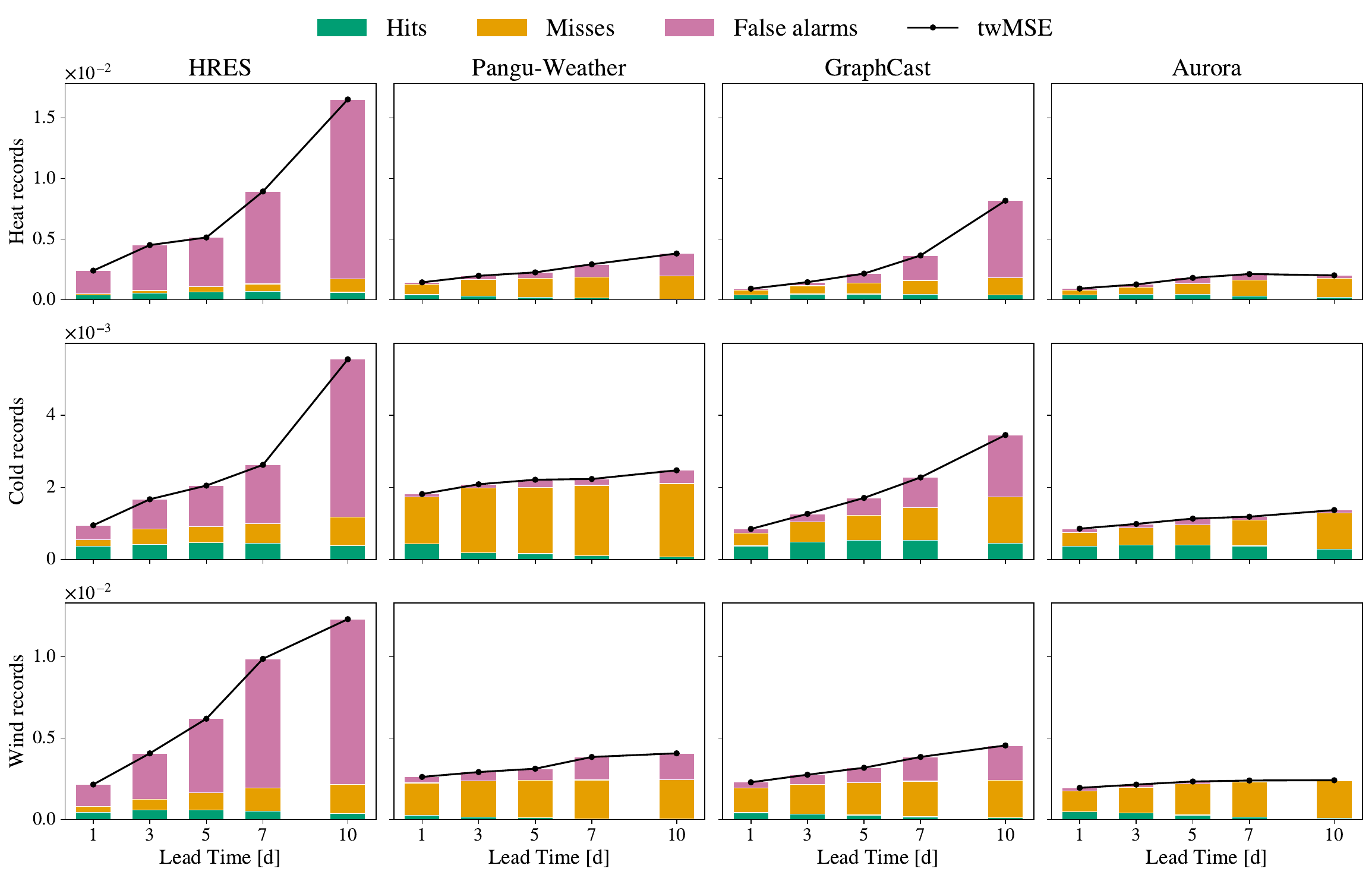}
  \caption{
    Lead-time-specific contributions of hits, misses, and false alarms to the average twMSE for forecasts of record-breaking events. Stacked bars show the component contributions, while the black line shows their sum, equal to the total average twMSE. Correct non-detections contribute zero. The scale of the $y$-axis is shared across models within each record type, but differs between record types.
  }
  \label{fig:twmse_contributions_0p25_land}
\end{figure}

Figure~\ref{fig:owmse_decomposition_0p25_land} complements the twMSE decomposition in Figure~\ref{fig:twmse_decomposition_0p25_land} by showing the corresponding components when decomposing the owMSE. In this case, calibration corresponds to $X = \E[Y \mid X, Y > t]$, which is not necessarily a property that we want our forecasts to satisfy, since it requires that an extreme event is always predicted. The miscalibration component again differs more strongly between models than the discrimination component. Since the physics-based model predicts extreme events more frequently, it generally receives the smallest miscalibration component, while Pangu-Weather has the largest. This contrasts sharply with the twMSE decomposition, under which HRES has the largest and Aurora the smallest miscalibration at longer lead times. The discrimination components exhibit much less variation between models: HRES generally has the greatest discrimination at short lead times, whereas Aurora has the greatest at longer lead times.

\begin{figure}
  \centering
  \includegraphics[width=\linewidth]{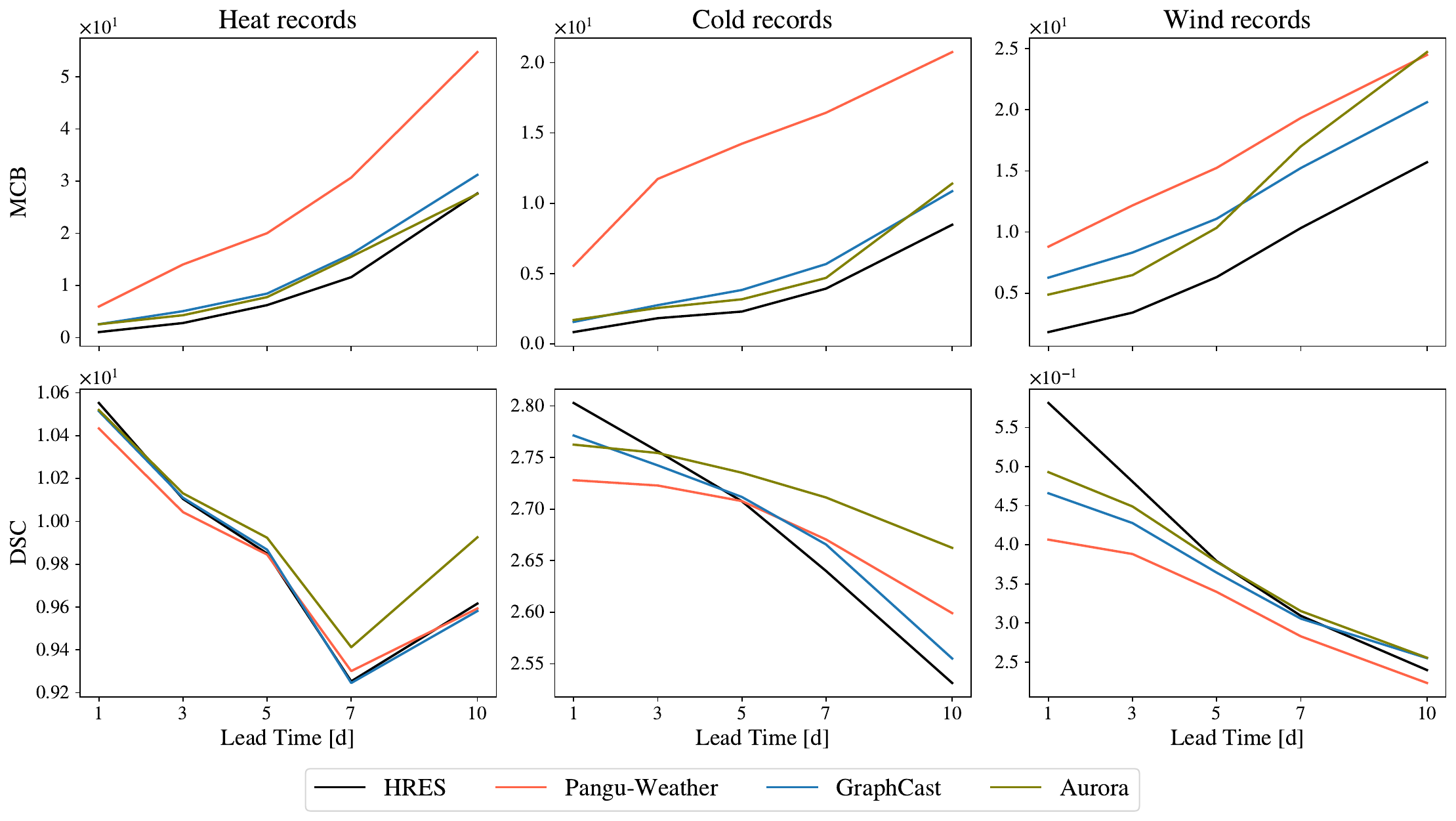}
  \caption{
    Miscalibration (MCB) and discrimination (DSC) components of the owMSE decomposition for forecasts of heat, cold, and wind record-breaking events. Grid point components are weighted by the number of observed record-breaking events. Note the different scales on the $y$-axes.
  }
  \label{fig:owmse_decomposition_0p25_land}
\end{figure}

Figure~\ref{fig:relative_miscalibration_0p25_land} expresses the miscalibration of each model relative to its corresponding score. This can be interpreted as the fraction of the total score that is removed by (in-sample) recalibration. For the owMSE, this fraction is at least $0.85$ for every model, record type, and lead time, and reaches approximately $0.95$ by day 10. HRES consistently has the lowest relative miscalibration, but recalibration substantially reduces the owMSE for all four models. For the twMSE, the relative miscalibration is generally lower and more model-dependent. Miscalibration generally contributes most to the average score for the physics-based model. 
This suggests that miscalibration dominates the owMSE much more than the twMSE. This can perhaps be explained by the fact that the miscalibration term that the owMSE assesses is not very meaningful, so that forecasts are much closer to being calibrated according to the twMSE requirement in Proposition \ref{prop:decomp_wse}.

\begin{figure}
  \centering
  \includegraphics[width=\linewidth]{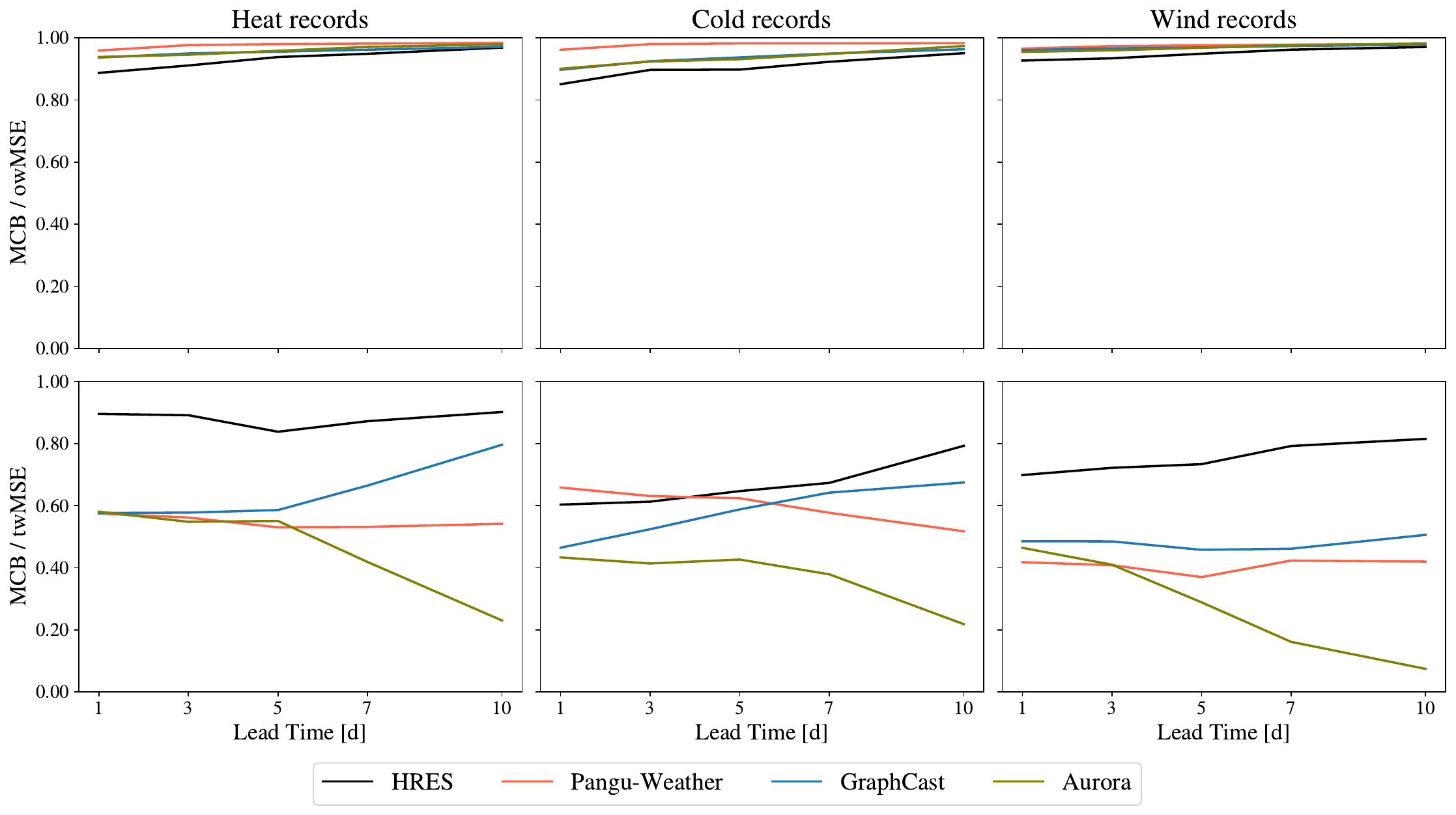}
  \caption{
    Contribution of forecast miscalibration to the total average scores. Results are shown for the owMSE (top row) and the twMSE (bottom row). The $y$-axis represents the fraction of the total score that is removed by in-sample isotonic recalibration. Values close to one therefore indicate that most of the score is attributable to miscalibration.
  }
  \label{fig:relative_miscalibration_0p25_land}
\end{figure}

\subsection{Global results}\label{app:global}

The results in the main text are aggregated across all grid points over land, excluding Antarctica. This is done since we are particularly interested in the impacts associated with extreme events, and impacts are largest over land. Nonetheless, to check the robustness of these results, we present results here for scores aggregated across all grid points (including ocean grid points, and including Antarctica).

Figure~\ref{fig:main_scores_0p25_global} presents the global counterparts of the land-based scores shown in Figure~\ref{fig:main_scores_0p25_land}. Over all forecast cases, the model rankings remain broadly unchanged: Aurora issues the most accurate temperature forecasts at medium and long lead times, while GraphCast receives the lowest MSE when forecasting wind speed at long lead times, and Aurora wind speed forecasts generally perform best under the PCRPS. The rankings under the owMSE are more sensitive to the spatial domain. The physics-based model retains the lowest owMSE for cold and wind records, but Aurora outperforms it for global heat records across all lead times. Under the twPCRPS, Aurora and GraphCast generally perform best, although HRES receives the lowest scores for cold and wind records at short lead times. Differences between models narrow at longer lead times, particularly for cold and wind records. Aurora shows the lowest twMSE for all three record types, whereas HRES shows the strongest increase in twMSE and ultimately has the highest values.

\begin{figure}
  \centering
  \includegraphics[width=\linewidth]{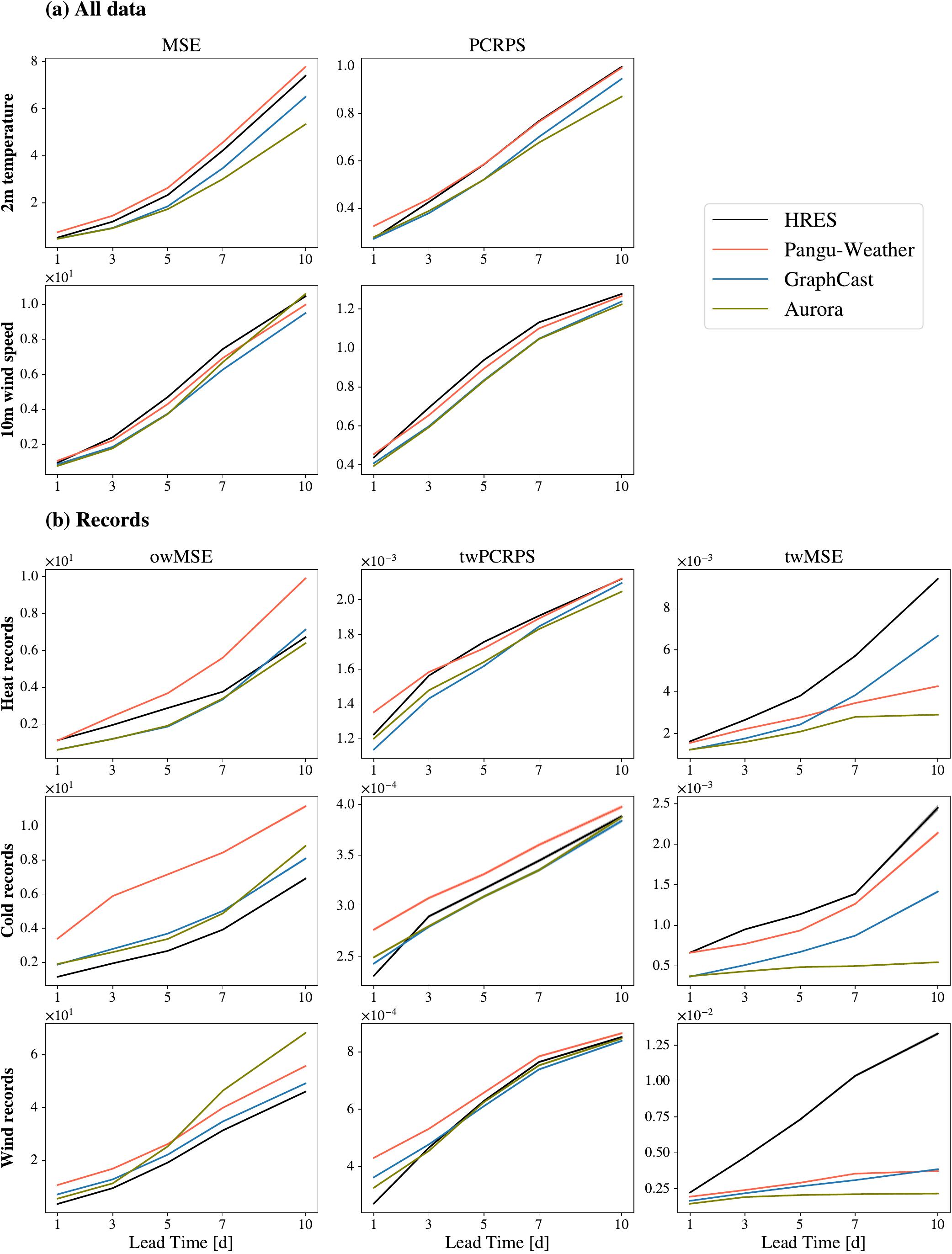}
  \caption{
    Average global scores for the four models.
    Panel~(a) shows MSE and PCRPS over all forecast cases. Panel~(b) shows owMSE, twPCRPS, and twMSE for heat, cold, and wind records. Shaded regions denote pointwise 99\% confidence intervals. Lower values indicate better performance. MSE, owMSE, and twMSE are reported in $\mathrm{K}^2$ for temperature and $\mathrm{m}^2\,\mathrm{s}^{-2}$ for wind speed; PCRPS and twPCRPS are reported in $\mathrm{K}$ and $\mathrm{m}\,\mathrm{s}^{-1}$, respectively.
  }
  \label{fig:main_scores_0p25_global}
\end{figure}

The global twMSE decomposition in Figure~\ref{fig:twmse_decomposition_0p25_global} closely resembles the land-based results in Figure~\ref{fig:twmse_decomposition_0p25_land}. The miscalibration of the physics-based model increases strongly with lead time, and is the largest for all three record types at every lead time. Aurora appears the most calibrated at longer lead times. The discrimination component decreases for every model and record type, with smaller differences between models. HRES has the largest discrimination at day 1 for cold and wind records, while Aurora has the largest discrimination at longer lead times for heat and cold records, and GraphCast for wind records. Thus, as over land, the global twMSE rankings are driven primarily by differences in the miscalibration component rather than discrimination.

\begin{figure}
  \centering
  \includegraphics[width=\linewidth]{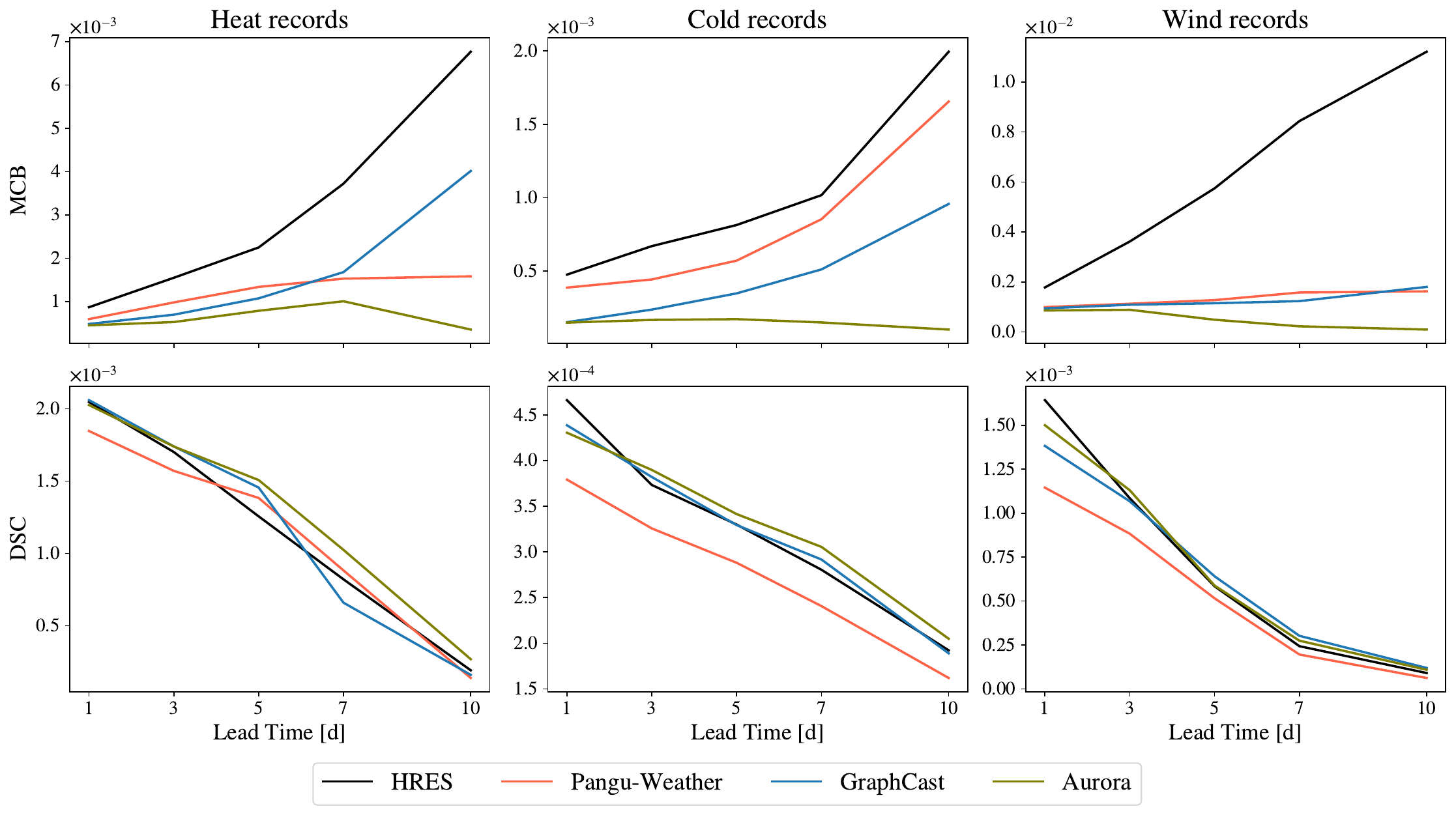}
  \caption{
    Miscalibration (MCB) and discrimination (DSC) components of the decomposition of the average global twMSE when evaluating predictions for heat, cold, and wind records. Lower MCB indicates better calibrated predictions, higher DSC indicates more informative predictions. Note the different scales on the $y$-axes.
  }
  \label{fig:twmse_decomposition_0p25_global}
\end{figure}

Table~\ref{tab:record_contingency_global} reports the global counterparts of the land-based contingency table statistics and twMSE contributions in Table~\ref{tab:record_contingency_land}. The broad twMSE contribution pattern is robust to the choice of domain. False alarms constitute the largest contribution to the HRES score for heat, cold, and wind records, accounting for $60.3\%$, $71.5\%$, and $78.8\%$, respectively. In contrast, misses form the largest contribution for every AI model and record type.

The contingency table statistics nevertheless reveal substantial differences between the land and global domains. In particular, the HRES frequency bias for heat records decreases from $1.521$ over land to $0.364$ globally, reversing from overforecasting over land to underforecasting globally. Despite these domain-dependent differences in record-forecast frequency, HRES scores remain predominantly false-alarm-driven, whereas AI model scores remain dominated by misses.

\begin{table}
  \caption{
    Global miss rate (MR), false alarm ratio (FAR), and frequency bias (FB) for the four models and three types of record-breaking events. The percentage contributions of hits, misses, and false alarms to global twMSE are also shown, with the largest contribution in each row in bold. Results are pooled over all initialisation times, grid points, and lead times. The twMSE contributions are latitude weighted.
  }
  \label{tab:record_contingency_global}
  \centering
  \begin{tabular}{@{}llrrr@{\hspace{14pt}}rrr@{}}
    \toprule
    &
    & \multicolumn{3}{c}{Contingency statistics}
    & \multicolumn{3}{c}{twMSE share [\%]} \\
    \cmidrule(lr){3-5}
    \cmidrule(lr){6-8}
    Record type
    & Model
    & MR [\%]
    & FAR [\%]
    & FB
    & Hit
    & Miss
    & False alarm \\
    \midrule

    \multirow{4}{*}{Heat}
    & HRES          & 83.87 & 55.64 & 0.364 &  9.41 & 30.30 & \textbf{60.29} \\
    & Pangu-Weather & 90.39 & 46.26 & 0.179 & 10.84 & \textbf{61.79} & 27.37 \\
    & GraphCast     & 84.95 & 49.36 & 0.297 & 11.88 & \textbf{50.36} & 37.76 \\
    & Aurora        & 88.38 & 28.22 & 0.162 & 15.25 & \textbf{77.11} &  7.63 \\

    \addlinespace
    \multirow{4}{*}{Cold}
    & HRES          & 69.33 & 65.25 & 0.882 & 11.17 & 17.31 & \textbf{71.51} \\
    & Pangu-Weather & 93.07 & 72.58 & 0.253 &  5.86 & \textbf{55.17} & 38.97 \\
    & GraphCast     & 84.99 & 48.46 & 0.291 & 19.13 & \textbf{44.64} & 36.22 \\
    & Aurora        & 84.06 & 29.18 & 0.225 & 24.55 & \textbf{66.72} &  8.74 \\

    \addlinespace
    \multirow{4}{*}{Wind}
    & HRES          & 47.26 & 45.33 & 0.965 &  5.15 & 16.04 & \textbf{78.81} \\
    & Pangu-Weather & 67.39 & 32.51 & 0.483 &  5.41 & \textbf{64.25} & 30.34 \\
    & GraphCast     & 81.39 & 37.79 & 0.299 &  9.05 & \textbf{62.02} & 28.93 \\
    & Aurora        & 62.97 & 18.68 & 0.455 & 12.02 & \textbf{82.00} &  5.99 \\

    \bottomrule
  \end{tabular}
\end{table}



\end{document}